\newcommand{\out}[1]{}
\newcommand{\delete}[1]{}
\newcommand{\notedomission}[1]{\medskip\noindent{\bf TEXT OMITTED}\\[2mm]}
\newcommand{\ignore}[1]{}
\newcommand{\bc}{\begin{center}}
\newcommand{\ec}{\end{center}}
\newcommand{\beq}{\begin{equation}}
\newcommand{\eeq}{\end{equation}}
\newcommand{\be}{\begin{enumerate}}
\newcommand{\ee}{\end{enumerate}}
\newcommand{\bi}{\begin{itemize}}
\newcommand{\ei}{\end{itemize}}
\newcommand{\bd}{\begin{description}}
\newcommand{\ed}{\end{description}}
\newcommand{\beqn}{\begin{equation}}
\newcommand{\eeqn}{\end{equation}}
\newcommand{\beqna}{\begin{eqnarray}}
\newcommand{\eeqna}{\end{eqnarray}}
\newcommand{\beqnas}{\begin{eqnarray*}}
\newcommand{\eeqnas}{\end{eqnarray*}}
\newcommand{\beqnaa}{$$\begin{array}{rcll}}  
\newcommand{\eeqnaa}{\end{array}$$}  
\newcommand{\beqnana}{$$\begin{array}{lrcll}}  
\newcommand{\eeqnana}{\end{array}$$}  
\newcommand{\btbl}[1]{\begin{center}\begin{tabular}{#1}}
\newcommand{\etbl}{\end{tabular}\end{center}}
\newcommand{\beqnc}{$$\begin{array}{rclcl}}
\newcommand{\eeqnc}{\end{array}$$}
\newenvironment{comment}{\begin{quote}\small}{\end{quote}}
\newtheorem{dclprop}{{\sc Proposition}} 
\newtheorem{dclprops}{{\sc Proposition}}[subsection] 
\newtheorem{dclbigthm}[dclprop]{THEOREM}
\def\thmlabel#1{\@bsphack\if@filesw {\let\thepage\relax
\xdef\@gtempa{\write\@auxout{\string
\newlabel{#1}{{\@Roman{\@currentlabel}}{\thepage}}}}}\@gtempa
\if@nobreak \ifvmode\nobreak\fi\fi\fi\@esphack}
\newtheorem{dclthm}[dclprop]{{\sc Theorem}}   
\newtheorem{dclthms}[dclprops]{{\sc Theorem}}   
\newtheorem{dcllem}[dclprop]{{\sc Lemma}}
\newtheorem{dcllems}[dclprops]{{\sc Lemma}} 
\newtheorem{dclsublem}[dclprop]{{\sc Sublemma}}
\newtheorem{dclcor}[dclprop]{{\sc Corollary}}
\newtheorem{dclcors}[dclprops]{{\sc Corollary}} 
\newtheorem{dcldfn}[dclprop]{{\sc Definition}}
\newtheorem{dcldfns}[dclprops]{{\sc Definition}}
\newtheorem{dclasss}[dclprops]{{\bf Assumption}}
\newtheorem{dclass}[dclprop]{{\bf Assumption}}
\newenvironment{prop}{\medskip\begin{dclprop}\sl}{\end{dclprop}}
\newcommand{\bsl}{\begin{verse}\sl}
\newcommand{\esl}{\end{verse}}
\newtheorem{exxs}[dclprop]{Exercises}
\newenvironment{exercises-with-preamble}{\begin{exxs}\rm}{\end{exxs}}
\newcommand{\bz}{\begin{quote}\small}
\newcommand{\ez}{\end{quote}}
\newcommand{\einference}[2]  
  {\shortstack
      {$ #1 $\\ \mbox{}\\ $ #2 $}}
\newlength{\txtlth}
\newlength{\txtht}
\renewcommand{\phi}{\varphi}
\newcommand{\logspace}{\textsc{logspace}\xspace}
\newcommand{\nlogspace}{\textsc{nlogspace}\xspace}
\newcommand{\node}{\textit{node}}
\newcommand{\startnode}{\textit{startnode}}
\newcommand{\curr}{\textit{curr}}
\newcommand{\grid}{\textit{grid}}
\newcommand{\poly}{\textit{poly}}
\newcommand{\CG}{\textit{CG}}
\newcommand{\targetnode}{\textit{targetnode}}
\newcommand{\target}{\textit{target}}
\newcommand{\purple}{\textsc{purple}\xspace}
\newcommand{\jag}{\textsc{jag}\xspace}
\newcommand{\jags}{\textsc{jag}s\xspace}
\newcommand{\ndjag}{\textsc{nd-jag}\xspace}
\newcommand{\ndjags}{\textsc{nd-jag}s\xspace}
\newcommand{\ramjags}{\textsc{ramjag}s\xspace}
\DeclareMathAlphabet{\mathitbf}{OML}{cmm}{b}{it}
\begin{document}

\mainmatter              

\title{Power of Nondeterministic JAGs on Cayley graphs%
\thanks{This work was supported by Deutsche Forschungsgemeinschaft (\textsc{dfg}) under grant \textsc{purple}.}
}

\author{Martin Hofmann \and Ramyaa Ramyaa
}
\institute{Ludwig-Maximilians Universit\"at M\"unchen\\
Oettingenstra{\ss}e 67, 80538 Munich, Germany\\
\texttt{\{mhofmann,ramyaa\}@tcs.ifi.lmu.de}
}

\authorrunning{M. Hofmann and R. Ramyaa}

%

\maketitle



\begin{abstract} The Immerman-Szelepcsenyi Theorem uses an algorithm
  for co-st-connectivity based on inductive
 counting to prove that
  \nlogspace is closed under complementation. We want to investigate
  whether counting is necessary for this theorem to hold.
   Concretely,  we show that Nondeterministic Jumping Graph Autmata (\ndjags) (pebble automata on graphs),  on   several families of Cayley  graphs, are equal in power to nondeterministic logspace Turing machines that are given such graphs as a linear encoding.
   In particular, it follows that \ndjags can solve
  co-st-connectivity on those graphs.  This came as a surprise since
  Cook and Rackoff showed that deterministic \jags cannot solve
  st-connectivity on many Cayley graphs due to their high
  self-similarity (every neighbourhood looks the same). Thus, our
  results show that on these graphs, nondeterminism provably adds
  computational power.
 
 The families of Cayley graphs 
we consider 
 include Cayley graphs of abelian groups and of all finite
 simple groups irrespective of how they are presented and graphs
 corresponding to groups generated by various product constructions,
 including iterated ones.
 
  We remark that assessing the precise power of nondeterministic
  \jags
 and in particular whether they can solve co-st-connectivity
  on arbitrary graphs is left as an open
 problem by Edmonds, Poon
  and Achlioptas. Our results suggest a positive answer to this
  question and in particular considerably limit the search space for a
  potential counterexample.
 \end{abstract}




\section{Introduction}
The technique of inductive counting forces nondeterministic machines to
enumerate all possible sequences of nondeterministic choices. This is used in
the Immerman-Szelepcsenyi theorem (\cite{Immerman}) proving that
nondeterministic space is closed under complementation and in 
\cite{Szelepcsenyi,smallinductivecounting}) to show similar results
for various space-complexity classes.
This raises the question whether the operation of counting is needed
for such exhaustive enumeration or indeed to show the closure under
complementation of the space classes. To investigate this question for
\nlogspace, we explore the problem of whether counting is needed to
solve co-st-connectivity, i.e., the problem of determining that there
is no path between two specified vertices of a given graph. Note that
for nondeterministic machines such as our Pebble automata this is not
trivially equivalent to st-connectivity (existence of a path) due to
their asymmetric acceptance condition. For nondeterministic Turing
machines with logarithmic space bound whose input (e.g.\ a graph) is
presented in binary encoding on a tape, we know of course by Immerman
Szelepcsenyi's result that the two problems are
equivalent. By \nlogspace we mean as usual the problems solvable by
such a machine; thus, more generally, \nlogspace = co-\nlogspace by
Immerman-Szelepcsenyi.

The framework to study the precise role of counting in that argument should
be a formal system which cannot a priori count, but when augmented with
counting, captures all of  \nlogspace .   There are many
natural systems operating over graphs which provably cannot count,
e.g.\ transitive closure logic (TC) (\cite{reachability-lo}).




These systems get abstract graphs (not their encoding) as input  and perform graph based operations such as moving pebbles along an edge. The main difference between such systems and Turing machines are that the latter get as input graph encodings, which embodies an ordering over the nodes. A total ordering over the nodes of a graph can be used to simulate coutning. The system getting abstract graphs as inputs can still simulate counters up to certain values depending on the structure of the input graph.

 In the case where a graph is presented as a binary relation it has
 been shown that positive TC (first order logic with the transitive closure operator used only in positive places within any formula) cannot solve
 co-st-connectivity \cite{pos-tc}. In the, arguably more natural, case
 where the edges adjacent to a node are ordered no such result is
 known. It is also to this situation that the aforementioned open
 question applies.

We consider here the simplest machines operating on such edge-ordered
graphs - the nondeterministic version of jumping automata on graphs
(\jags).

Cook and Rackoff \cite{jag} introduced jumping automata on graphs 
\jags (a finite automaton which can move a fixed number of pebbles along edges of a
graph) and showed that they cannot solve st-connectivity on undirected graphs (ustcon).
It also known \cite{ramjag} that \jags  augmented with counters, called 
\ramjags,  can simulate Reingold's algorithm \cite{Reingold-ud} for ustcon. Among the consequences of
this are that \jags only need the addition of counters to capture all
of \logspace on connected graphs  and that without the
external addition of counters, \jags cannot count. 
Thus, 
\ndjags may provide us with a platform to explore whether counting is needed for
co-st-connectivity and more generally all of \nlogspace.
 While nondeterministic
\jags (\ndjags)  can obviously solve st-connectivity, even for directed graphs,  it is unclear
whether they can solve co-st-connectivity. 
Note that if an \ndjag, on an input directed graph, can order the graph when the edge directions are disregarded, one could order the underlying undirected graph and use that order to simulate Immerman-Szelepcsényi so as to decide co-st-connectivity for the directed graph. So, nothing is lost by restricting our investigation to undirected graphs. 

Several extensions of \jags  such as 
\jags  working on graphs with named nodes (\cite{nnjags}), probabilistic, randomized and  nondeterministic   
 \jags (\cite{probjag},\cite{nnjags},\cite{nnjag2}) have been studied, but with the motivation of establishing space/time lower or upper bounds of solving reachability questions as measured by the number of states as a function of size and degree of the input graph. This differs from our motivation. 
 
A non-constant lower bound in this sense for solving co-st-connectivity with nd-jags would imply that counting is needed, but we are not aware of space lower bounds for  \ndjags. 

The question of implementing Immerman-Szelepcsenyi's  algorithm 
 in
\ndjags is considered in \cite{nnjag2}, but has been left as an
open question. 

The results of this paper constitute an important step towards
the solution of this question. Namely, we show that \ndjags can
order and thus implement any \nlogspace-algorithm on a variety of
graph families derived from Cayley graphs.  This came as a surprise to
us because Cayley graphs were used by Cook and Rackoff to show that
deterministic \jags cannot solve st-connectivity. Intuitively, this is
due to the high degree of self-similarity of these graphs (every
neighbourhood looks the same).  It thus seemed natural to conjecture,
as we did, that \ndjags cannot systematically explore such graphs
either and tried to prove this. In the course of this attempt we found
the conjecture to be false and discovered that in the case of pebble
automata, nondeterminism indeed adds power.

Our main results show that the Cayley graphs of the following groups
can be ordered by \ndjags: all abelian groups
and simple finite groups irrespective of the choice of generators,
(iterated) wreath products $G\wr H$ once $G$ can be ordered and a mild
size condition on $H$ holds. This implies that none of these groups
can serve as a counterexample for a possible proof that \ndjags cannot
solve co-st-connectivity on undirected graphs.

One may criticize that we do not in this paper answer the question
whether or not \ndjags can solve co-st-connectivity on undirected
graphs. However, this is not at all an easy question and we believe
that our constructions will help to eventually solve it because they
allow one to discard many seemingly reasonable candidates for
counterexamples such as the original counterexamples from Cook and
Rackoff or the iterated wreath products thereof that were used by one
of us and U.\ Sch\"opp \cite{purple-reach} in an extension of Cook and
Rackoff's result.  

An anonymous reviewer of an earlier version wrote
that trying to argue that \ndjags can do co-st-conn without counting
``seems silly''. One should ``just take a graph that is between
structured and unstructured and then it likely can't be done.''. When
we started this work this would have been our immediate reaction, but
on the one hand, it is not easy to construct a graph ``between
structured and unstructured'' in such a way that one can prove
something about it. On the other hand, we found the additional power
of nondeterminism in this context so surprising that we are no longer
convinced that \ndjags really cannot do co-st-connectivity and hope
that our constructions will be helpful in the process of deciding this
question.

\section{Preliminaries/definitions}
\subsection{Jumping Automata on Graphs}
Cook and Rackoff (\cite{jag}) introduced Jumping Automata on Graphs
(\jags) in order to study space lower bounds for reachability problems.
  A \jag is a finite automaton  with a fixed number of pebbles and a transition table. It gets as input a graph of fixed degree and labeled edges and begins its computation with all of its pebbles placed on a distinguished start node. During the course of its computation, the \jag moves the pebbles (a pebble may be moved along an edge of the node it is on, or to the node that has another pebble on it) according to the transition table.  Thus, \jags
 are a nonuniform machine model, with different machines
for graphs of different degrees. This does not affect
reachability results, since graphs of arbitrary degrees can be transformed into 
graphs of a fixed degree while preserving connectivity relations.

A labelled degree $d$ graph for $d>1$ comprises a set $V$ of vertices
and a function $\rho:V\times \{1,\dots,d\}\rightarrow V$. If
$\rho(v,i)=v'$ then we say that $(v,v')$ is an edge labelled $i$ from
$v$ to $v'$. All graphs considered in this paper are labelled degree
$d$ graphs for some $d$. The important difference to the more standard
graphs of the form $G=(V,E)$ where $E\subseteq V\times V$ is that the
out degree of each vertex is exactly $d$ and, more importantly, that
the edges emanating from any one node are linearly ordered. One
extends $\rho$ naturally to sequences of labels (from $\{1,\dots,
d\}^*$) and writes $v'=v.w$ if $\rho(v,w)=v'$ for $v,v'\in V$ and
$w\in\{1,\dots,d\}^*$. The induced sequence of intermediate vertices
(including $v,v'$) is called the path labelled $w$ from $v$ to
$v'$. Such a graph is undirected if for each edge there is one in the
opposite direction $(\rho(v,i)=v'\Rightarrow \exists
j.\rho(v',j)=i$. It is technically useful to slightly generalise this
and also regard such graphs as undirected if each edge can be reversed
by a path of a fixed maximum length.

\begin{definition}
 A $d$-Jumping Automaton for Graphs
 ($d$-\jag),  $J$, consists of
 \begin{itemize}
 \setlength{\itemsep}{1pt}
                \setlength{\parskip}{0pt}
                \setlength{\parsep}{0pt}
                \setlength{\leftmargin}{-.25in}
  \item a finite set $Q$ of states with distinguished start state $q_0$ and accept state $q_a$
  \item a finite set  $P$ of objects called pebbles (numbered $1$ through 
  $p$)
  \item  a transition function $\delta$ which assigns to each state $q$ and each equivalence relation $\pi$ on $P$ (representing incidence of pebbles) a set of pairs $(q',\vec c)$ where $q'\in Q$ is the successor state and where $\vec c=(c_1,\dots,c_p)$ is a sequence of moves, one for each pebble. Such a move can either be of the form $\text{move}(i)$ where $i\in\{1,\dots,d\}$ (move the pebble along edge $i$) or $\text{jump}(j)$ where $j\in\{1,\dots,p\}$ (jump the pebble to the (old) position of pebble $j$). 
\item The automaton is deterministic if $\delta(q,\pi)$ is a singleton set for each $q,\pi$. 
 \end{itemize}
\end{definition}
 
 The input to a  \jag  is a labelled degree $d$ graph. 
 An \emph{instantaneous description} (id) of a \jag $J$
 on an input graph $G$ is specified by a state $q$ and a function, $\node$, from the $P$ to the nodes of $G$ where for any pebble $p$, $\node(p)$ gives the node on which the pebble $p$ is placed. 

Given an id $(q,\node)$ a legal move of $J$ is an element $(q',\vec c)\in \delta(q,\pi)$ where $\pi$ is the equivalence relation given by $p \mathrel{\pi} p'\iff \node(p)=\node(p')$.

The action of a
 \jag,  or the \emph{next move} is given by its transition function and consists of the control passing to a new state after moving each pebble $i$ according to $c_i$: (a) if $c_i=\text{move}(j)$ move $i$ along edge $j$; ~ (b) if $c_i=\text{jump}(j)$ move (jump) it to $node(j)$.
Any sequence (finite or infinite) of id’s of a \jag
 $J$ on an input $G$ which form consecutive legal moves of $J$ is called a \emph{computation} of $J$ on $G$. We assume that input graphs $G$ have distinguished nodes $\startnode$ and $\targetnode$, and that 
\jags have dedicated pebbles $s$ and $t$.  The initial id of $J$ on input $G$ has state $q_0$, $\node(t)=\targetnode$ and $\forall q \neq t.~ \node(q)=\startnode$.  $J$ accepts $G$ if the 
computation of $J$ on $G$ starting with the initial id ends in an id with state $q_a$. 

\paragraph{Connected components} 
A \jag cannot count, and so,  is a very weak model over discrete graphs (graphs with no edges). Since we are interested in investigating co-st-connectivity, this behaviour on graphs with a large number of components is not relevant. 
To avoid such trivialities we assume that graphs always have at most
two connected components and that each connected component is
initially pebbled with either $s$ or $t$. Thus, there is initially no
connected component without pebbles.


As already mentioned, jags are nonuniform with respect to the degree. This can be overcome by modifying the definition to place the pebbles on the edges instead of nodes, and modifying the pebble-move function to use the edge order instead of outputting a specified labeled edge. Concretely, a pebble placed on an edge can either be moved to the next edge coming out of the current edge’s source (called the \emph{next} move) or to the first edge coming out of the current edge’s target (called the \emph{first} move).

Alternatively, we can assume that a degree-$d$ graph is first transformed to a degree-$3$ graph by replacing each node with a cycle of size $d$ (replacement product). In the sequel, when running a \jag or \ndjag on a family of graphs where the degree is not known in advance,
 we always assume that one of these modifications has been put in place.

Cook and Rackoff's result \cite{jag} shows that even with this modifications, \jags can only compute local properties.

\begin{theorem}[\cite{jag}] (u)st-connectivity cannot be solved by \jags. 
\end{theorem}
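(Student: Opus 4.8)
The plan is to argue by contradiction. Suppose some \jag $J$ with $p$ pebbles and $q$ states decides ust-connectivity, meaning that on every input it \emph{halts} and accepts precisely when $\startnode$ and $\targetnode$ lie in the same connected component. Following the theme of this paper I would try to fool $J$ with Cayley graphs of cyclic groups: take $G_{\mathrm{no}}$ to be the disjoint union $C_n \sqcup C_n$ of two $n$-cycles, with $\startnode$ and $\targetnode$ placed on different cycles, and $G_{\mathrm{yes}}$ to be the single cycle $C_{2n}$ with $\startnode$ and $\targetnode$ antipodal. Correctness forces $J$ to accept $G_{\mathrm{yes}}$ and reject (by halting) on $G_{\mathrm{no}}$, so it suffices to show that for $n$ large relative to $p$ and $q$ the machine behaves identically on the two inputs.

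First I would set up a bookkeeping in which each pebble's location is recorded as a \emph{signed displacement} in $\ZZ$ along the cycle it currently sits on, so that a $\texttt{move}$ adds $\pm 1$ and a $\texttt{jump}(j)$ overwrites the record by that of pebble $j$. The crucial structural fact is that $\delta$ reads only the incidence pattern $\pi$, i.e.\ which pebbles coincide, and on $C_m$ two pebbles on the same cycle coincide exactly when their recorded displacements agree modulo $m$. If no pebble ever accumulates a net displacement of absolute value $n$ or more from its anchor, then every incidence test returns the same verdict on $C_{2n}$ as on $C_n \sqcup C_n$, since a nonzero difference of absolute value below $n$ is nonzero both modulo $n$ and modulo $2n$; the two computations would then run in lockstep. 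The naive hope is thus that $J$ simply cannot push a pebble a full revolution around a large cycle.

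The heart of the matter, and the step I expect to be the genuine obstacle, is that this naive displacement bound is \emph{false} as stated: because $J$ may run for up to roughly $q\cdot N^{p}$ steps (its configuration count on an $N$-node graph) before it is obliged to halt, it has ample time to walk a pebble all the way around, and a strategy such as ``anchor pebble $1$ at $\startnode$, $\texttt{jump}$ pebble $2$ to $\targetnode$, then repeatedly $\texttt{move}$ pebble $2$ and test coincidence with pebble $1$'' really does meet the anchor on $G_{\mathrm{yes}}$ while looping on $G_{\mathrm{no}}$. The reason this does not break the theorem is exactly the halting requirement: to \emph{halt} on $G_{\mathrm{no}}$ the deterministic automaton must leave such a loop after finitely many steps, and a deterministic \jag leaves a loop only upon reaching a previously unseen configuration. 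The real work, therefore, is the Cook--Rackoff counting estimate: one bounds the number of genuinely distinct configurations --- the state together with the pattern of pairwise displacement differences up to the relevant modulus --- by a function $N(p,q)$ of $p$ and $q$ alone, and uses the pigeonhole principle to show that on a cycle longer than $N(p,q)$ the displacement differences must become periodic, so that $J$ enters a cycle of configurations whose ``shape'' is insensitive to whether the underlying length is $n$ or $2n$. Choosing $n > N(p,q)$ then makes the halting computations on $G_{\mathrm{yes}}$ and $G_{\mathrm{no}}$ agree step for step, contradicting correctness. The undirected hypothesis is automatic since cycles are undirected, and the dependence on the degree is removed by the degree-reduction and edge-pebbling conventions already described above.
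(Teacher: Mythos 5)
Your proposed counterexample family is exactly a family on which deterministic \jags \emph{can} decide ust-connectivity, so the argument cannot be repaired along these lines. On any disjoint union of cycles, the following three-pebble \jag works: leave one pebble anchored at $\startnode$ and one at $\targetnode$, and walk a third pebble from $\startnode$ repeatedly along edge $1$, testing after each step whether it coincides with the $\targetnode$-anchor (accept) or has returned to the $\startnode$-anchor (reject). This halts after at most one revolution and is correct, so it distinguishes $C_{2n}$ from $C_n\sqcup C_n$ for every $n$. The step of your proof that this kills is precisely the one you flag as ``the real work'': there is no bound $N(p,q)$, independent of $n$, on the number of genuinely distinct configurations, because the pairwise displacement difference between the walker and an anchor ranges over $0,1,\dots,n$ and each value yields a distinct incidence pattern at the moment of coincidence. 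The paper itself points at this mechanism: two pebbles joined by a long non-self-intersecting path of identically labelled edges act as a counter up to the path length, and a single long cycle is the extreme case of such a path. Pigeonhole on configurations gives eventual periodicity only after a number of steps that grows with $n$, which is useless here since the machine needs only $O(n)$ steps to succeed.

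The actual Cook--Rackoff proof has to avoid exactly this, which is why it uses the grid graphs $(\ZZ/\ell\ZZ)^d$ with $\ell$ small and $d$ large --- graphs of low \emph{exponent}, as the paper notes, where every fixed sequence of moves, when repeated, returns to its starting point within at most $\ell$ iterations. There a deterministic walker driven by a finite control traces an eventually periodic move sequence whose period corresponds to a group element of small order, so a single pebble (and, by an induction on the number of pebbles that accounts for jumps and meetings, any fixed set of $p$ pebbles) can only reach a bounded fraction of the $\ell^d$ nodes. That induction on pebbles, together with the low-exponent hypothesis replacing your (false) configuration-count bound, is the content you are missing; the cycle graphs you chose have maximal exponent and are therefore the worst possible choice of hard instance.
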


\subsubsection{Counters}


Since counting is central to our study, we present various results pertaining to the ability of
 \jags to count. It is obvious that \jags
   cannot count the number of nodes  of a discrete graph.  

The following Lemma is a direct consequence of Cook-Rackoff and Reingold's result. 
\begin{lemma}\jags cannot count the number of nodes of  connected graphs.
\end{lemma}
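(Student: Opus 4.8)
The plan is to obtain the lemma as a contradiction from the two results quoted just above it: Cook and Rackoff's theorem that (u)st-connectivity is unsolvable by \jags, and the fact that \ramjags (\jags augmented with counters) can simulate Reingold's \logspace algorithm for ustcon. The guiding observation is that being able to count nodes is exactly the resource which, when added to a \jag, produces a \ramjag; so if a plain \jag could count, the two models would coincide on connected graphs and Cook--Rackoff would be violated.

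Concretely, I would assume for contradiction a \jag $J$ that counts the number of nodes of every connected graph, and first argue that this ability is interchangeable with possessing counters bounded by the graph size. Counting the nodes means, operationally, performing a systematic traversal that visits each node exactly once and recognises when the traversal is complete; this yields the increment, decrement, zero-test and comparison-against-$n$ operations of a counter ranging over $\{0,\dots,n\}$, where $n$ is the number of nodes. Since Reingold's algorithm runs in space $O(\log n)$, each of its work variables is bounded by a fixed polynomial in $n$, and such polynomially bounded counters are assembled from a constant number of size-$n$ counters by a routine tuple encoding. Hence $J$ can simulate the counters of a \ramjag.

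Invoking the cited simulation of Reingold's algorithm by \ramjags, the resulting machine---and therefore $J$ itself---would decide ustcon on the (at most two-component) inputs considered here, since a \jag started with the $s$-pebble can never leave the connected component of $s$ and may count and search within that component, which is itself connected. This contradicts Cook and Rackoff's theorem that (u)st-connectivity cannot be solved by \jags, establishing the lemma.

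The only delicate point, and the main obstacle, is the first step: fixing a precise reading of ``count the number of nodes'' and checking that it really is equivalent to the counter resource of the \ramjag model rather than something strictly weaker. One must verify that the counter operations can be implemented using only the traversal supplied by the counting routine---without smuggling in an independent linear ordering of the vertices---and that the connectivity hypothesis is used correctly, so that the count produced for the component of $s$ is exactly the quantity the reachability search needs. Once this interface between ``counting nodes'' and ``\ramjag counters'' is pinned down, the contradiction with Cook--Rackoff is immediate.
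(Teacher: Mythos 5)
Your argument is exactly the paper's: the paper's proof is the two-line observation that \jags cannot solve ustcon (Cook--Rackoff) while \jags with counters can (via Reingold), so counting ability would yield a contradiction. Your elaboration of the interface between ``counting nodes'' and ``having counters'' is a reasonable filling-in of a step the paper leaves entirely implicit.
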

\begin{proof}
\jags cannot explore connected graphs (\cite{jag}), while \jags augmented with counters can implement Reingold's algorithm for st-connectivity \cite{ramjag}.
\end{proof}

On the other hand, on particular graphs, \jags do have a certain
ability to count. For instance, if two pebbles are placed on nodes
connected by a (non self-intersecting) path of identically labelled (say $2$) 
edges then the
pebbles can be interpreted as a counter storing a value up to the
number of edges in this
path. 
Any implementation of counters by a deterministic \jag has to count by
repeating some fixed sequence of moves (in the above example, this was
`move along edge $2$'). The result by \cite{jag} for the deterministic
\jags used graphs with low ``exponent'' (the maximum number of times any
series of moves can be repeated without looping), so that \jags on
such graphs cannot count to a high value. This situation changes when we add non-determinism.



It is well-known that \jags can implement any  \logspace algorithm on connected graphs with a total order (made available in an appropriate way, e.g.\ by providing a path with a specific label that threads all nodes, or by an auxiliary \jag that is able to place a pebble on one node after the other).  A consequence of
the  implementability of Reingold's algorithm on 
\jags  equipped with counters is that those 
can order connected edge-labelled graphs  and so can implement any \logspace algorithm on such graphs (a run of Reingold's algorithm induces a total order on the nodes).  This shows that this model is quite powerful. 

\subsection{Nondeterministic Jumping Automata on Graphs}

\begin{definition}
 A  nondeterministic Jumping Automaton for Graphs (\ndjag)
  $J$ is a \jag  whose transition function is nondeterministic. It accepts an input if there is \emph{some} finite computation starting at the initial configuration that reaches $q_a$, and rejects an input if  \emph{no} such computation does. A $d-$\ndjag operates on graphs of degree $d$. Again, we assume that appropriate degree reduction is applied before inputting a graph to an \ndjag. 
\end{definition}
  
 It is easy to see that the argument used in \cite{jag} for deterministic 
\jags  which is similar to a pumping argument cannot be adapted easily to
\ndjags, which can solve reachability (guess a path from $\startnode$ to $\targetnode$). However, it is unclear whether \ndjags can solve co-st-connectivity. 
    Since \jags cannot count, it is reasonable to believe that \ndjags cannot implement Immerman-Szelepcsenyi's algorithm, and more generally, cannot solve co-st-connectivity.

\subsection{Cayley graphs}



Cayley graphs encode the abstract structure of groups. The Cayley graph of the group $G$ with generators $\vec{m}$, written as $CG(G, \vec{m})$ is the labelled degree $|\vec{m}|$ graph whose nodes are the elements of $G$ and where the edge labelled $i$ from node $v$ leads to $m_iv$, formally $\rho(v,i)=m_iv$. 

We take the node corresponding to $1_G$ to be $\startnode$.

We note that Cayley graphs are ``undirected'' in the relaxed sense
since for every edge there is a fixed length path (labelled by the
inverse of the generator labelling the edge) that inverts it. Indeed,
connectivity and strong connectivity coincide for them. Often, the
definition of Cayley graphs presupposes that the set of generators be
closed under inverses in which case they are undirected in the strict
sense.

It follows directly that (1) st-connectivity for Cayley graphs is in \logspace.
(2) For any two nodes $u$ and $v$, there is an automorphism that maps $u$ to $v$ 
(3) For any two pairs of nodes $(u_1,v_1)$ and $(u_2, v_2)$, either every sequence of labeled edges that reaches 
 $v_1$  from $u_1$ also reaches $v_2$ from $u_2$, or none does.  So we give paths as a list of edge-labels and define a function $\target$ 
such that $\target(u,p) = v$ iff $p$ is a path between $u$ and $v$. We refer to the list obtained by appending $b$ to $a$ by either $ab$ or $a:b$. 
(4)  For any two nodes $u$ and $v$, there is exactly one $w$ such that for every path $\rho$ from $u$ to $v$, $\rho$ is a path from $w$ to $u$.

\section{Algorithms} 



\begin{theorem}
\ndjags on Cayley graphs can  perform group multiplication and inverse.
\end{theorem}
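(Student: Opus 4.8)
The plan is to represent a group element $g\in G$ by a pebble sitting on the node $g$ of the Cayley graph $CG(G,\vec m)$; this is meaningful because $\startnode=1_G$, so a pebble reached from the start node along a label sequence $p$ sits exactly on $\target(1_G,p)$. By property~(3) of Cayley graphs, following a fixed label sequence acts the same way from every node, so there is a well-defined group element $g_p$ with $\target(u,p)=g_p\,u$ for all nodes $u$; concretely, if $p=i_1\cdots i_k$ then $g_p=m_{i_k}\cdots m_{i_1}$, since each edge step is left multiplication by a generator. The single idea behind both subroutines is that nondeterminism lets the \ndjag \emph{guess} such a label sequence edge by edge while carrying two pebbles along it in lockstep, thereby transporting the displacement realised on one pebble onto the other. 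The automaton never has to store the guessed sequence: at each step it picks one label $i\in\{1,\dots,d\}$ and issues $\text{move}(i)$ to both travelling pebbles simultaneously (keeping every other pebble fixed via a $\text{jump}$ to its own position), and since the transition function is parametrised by the incidence relation among pebbles, it can test coincidence after each step.

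\textbf{Multiplication.} Given pebbles $P_a,P_b$ on nodes $a,b$, place an auxiliary pebble $P_1$ on $\startnode$ (hence on $1_G$) and $P_2$ on the node of $P_b$ (hence on $b$). Repeatedly guess a label $i$ and move $P_1$ and $P_2$ along edge $i$ together; after the guessed prefix $p$, $P_1$ is on $g_p$ and $P_2$ on $g_p\,b$. When $P_1$ coincides with $P_a$ we have $g_p=a$ exactly, so $P_2$ is on $a\,b$, which the automaton reports. Completeness holds because the Cayley graph is connected, so some label sequence leads from $1_G$ to $a$ and the corresponding branch succeeds; soundness is automatic because coincidence of $P_1$ with $P_a$ means $P_1$ is literally on node $a$, forcing $g_p=a$ and hence $P_2$ on $ab$ \emph{regardless of the route taken}, so every halting branch returns the correct node. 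The product in the opposite order is obtained symmetrically, by guessing a path to $b$ while carrying a pebble started on $a$.

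\textbf{Inverse.} Keep a reference pebble $P_0$ fixed on $\startnode=1_G$, put $P_1$ on the node of $P_a$ (hence on $a$) and $P_2$ on $\startnode$, and guess a label sequence $p$ moving $P_1,P_2$ in lockstep; after prefix $p$, $P_1$ is on $g_p\,a$ and $P_2$ on $g_p$. Stop when $P_1$ coincides with $P_0$: then $g_p\,a=1_G$, i.e.\ $g_p=a^{-1}$, so $P_2$ is on $a^{-1}$. Connectivity again supplies a witnessing branch (a path from $a$ to $1_G$), and exact node coincidence again guarantees that every halting branch is correct. Both routines use only a constant number of pebbles, consistent with the fixed pebble bound of the \jag model.

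The one subtlety worth stressing—and really the only obstacle—is the interplay of completeness and soundness: completeness rests on connectivity (a witnessing label sequence always exists), while soundness rests on the fact that node identity is \emph{exact}, so an accepting branch cannot coincide ``by accident'' at a wrong node. Because soundness is automatic, these procedures may safely be invoked as functional subroutines inside the later ordering constructions, even though they are internally nondeterministic. The remaining point, namely that label guessing and edge moves must be expressed within the degree-reduced model (edge-pebbles with the \emph{next}/\emph{first} moves, or the replacement product), is routine given the conventions already adopted and does not affect the argument, since those transformations preserve the relevant connectivity and path structure.
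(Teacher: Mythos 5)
Your proof is correct and takes essentially the same approach as the paper: nondeterministically guess a path label by label while moving two pebbles in lockstep, using exact pebble coincidence to detect when the guessed displacement equals the intended factor, with completeness from (strong) connectivity and soundness from exact node identity. The only immaterial difference is in the inverse, where the paper guesses the answer node and verifies $node(q)\cdot node(p)=node(s)$ directly, whereas you transport a companion pebble in lockstep from $a$ to $1_G$ and read off $a^{-1}$; both rest on the same mechanism.
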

\begin{proof} 
Multiplication: Given $p$ and $q$, to place $r$ on $node(p).node(q)$ proceed as follows: Jump pebble $p'$ to $s$ (start node) and $r$ to $q$. Nondeterministically, trace a path from $s$ to $p$ using $p'$. In tandem with the moves made by $p'$, move $r$.  Halt when  $p'$ reaches $p$.
Inverse: Given $p$ to place $q$ on a node such that $node(q).node(p) = node(s)$ proceed as follows: 
 Nondeterministically move the pebble $q$  such that $node(q).node(p) = node(s)$.
\end{proof}

\ndjags can perform other operations such as verifying that the input is \emph{not} a Cayley graph, determining if a node is an element of the center, determining if the subgroup generated by given elements is normal, etc. However, it is not obvious whether several related operations, such as checking whether the input is a Cayley graph, are doable.

If an \ndjag can be guaranteed to place a dedicated pebble on all the nodes of a class of input graph that are connected to the start node, then it follows that \ndjags can solve co-st-connectivity on this class of graphs. We call such graphs to be \emph{traversable}. However, co-\nlogspace may require a stronger property. It is well-known that if an \ndjag can order a class of graphs (i.e., can compute a total order over the nodes) then it can count, and as a consequent, decide any (co)-\nlogspace property over the class of graphs. We call such graphs \emph{orderable}.

\begin{definition} A family of graphs is \emph{traversable} if there is some \ndjag $J$ such that for each graph $G$ in the family 
all accepting computations involve placing a dedicated pebble $\curr$ on every node reachable from $\startnode$,  and $J$ does accept $G$ (i.e.\ there is at least one accepting computation). The family  is \emph{orderable} if it is traversable by an \ndjag $J$, and for every accepting computation of $J$ the ordered sequence of nodes on which $\curr$ is placed along the computation is always the same. A family of groups with generators is \emph{traversable} (or \emph{orderable}) if its Cayley graphs are. 
\end{definition}

\begin{theorem} 
 If a family of graphs is traversable,  \ndjags can solve co-st-connectivity on it. If a  family of graphs is orderable then any \ndjags can decide any \nlogspace property on the subgraphs reachable from $\startnode$.
 
\end{theorem}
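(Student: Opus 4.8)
For the first claim the plan is to take an \ndjag $J$ witnessing traversability and wrap it in a new \ndjag $J'$ that simulates $J$ while inspecting, at each configuration, the incidence relation $\pi$ it is handed. The key observation is that $\curr$ sits on $\targetnode$ exactly when $\curr\mathrel{\pi}t$, so this event is locally detectable. I would let $J'$ follow $J$ faithfully, divert to a dedicated absorbing non-accepting state the instant $\curr\mathrel{\pi}t$ occurs, and accept only upon reaching $q_a$; equivalently, $J'$ accepts iff $J$ reaches $q_a$ while $\curr$ never coincides with $t$ throughout the run (including the final configuration). To verify correctness I would split on whether $\targetnode$ is reachable from $\startnode$. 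If it is, then by traversability every computation of $J$ reaching $q_a$ places $\curr$ on every reachable node, hence on $\targetnode$, so each such computation makes $\curr$ and $t$ coincide at some point and $J'$ has no accepting branch, i.e.\ it rejects. If $\targetnode$ is not reachable, then $J$ still has an accepting computation (traversability supplies at least one) along which $\curr$ visits only reachable nodes, none of them $\targetnode$; the matching computation of $J'$ is never diverted and reaches $q_a$, so $J'$ accepts. Thus $J'$ accepts precisely the graphs with no $\startnode$-$\targetnode$ path.

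For the second claim I would exploit the canonical order furnished by orderability: along every accepting computation of the ordering \ndjag $J$, the pebble $\curr$ enumerates the reachable nodes in one and the same sequence $v_1,\dots,v_n$. The plan is to turn this order into counters ranging over $\{1,\dots,n\}$ and then appeal to the (by the discussion preceding the theorem, well-known) simulation of logarithmic-space machines by \jags equipped with such counters. Concretely, a counter of value $k$ is a pebble resting on $v_k$: it is reset by jumping to $\startnode=v_1$, incremented by advancing from $v_k$ to its successor $v_{k+1}$ in the order, and tested for overflow by detecting that no further node is enumerated. A constant number of these counters holds $O(\log n)$ bits, matching a logspace work tape, and the bits of a linear encoding of the reachable subgraph are recoverable: the $\ell$-th neighbour of the node marked by a pebble is reached by a single edge move, and its position in the order is read off with an auxiliary counter. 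Guessing the nondeterministic choices of the simulated \nlogspace machine and matching the existential acceptance condition of the two models then yields an \ndjag that decides the property.

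The delicate step---and the one I expect to be the main obstacle---is that the primitive ordering operations (successor, comparison, rank) cannot be precomputed and stored, since that would take $n\log n$ bits; they must be recomputed on demand by re-running the \emph{nondeterministic} $J$, and distinct re-runs must return mutually consistent answers. Here orderability is exactly the hypothesis that rescues the argument: every \emph{completing} run of $J$ realises the same order. I would therefore implement each ordering operation as a subroutine that nondeterministically runs $J$ to completion, insisting that it reach $q_a$ and reading off the required node (for a successor query, the node enumerated immediately after the one carrying the query pebble), and I would permit the outer machine to accept only when every invoked subroutine run reaches $q_a$. On any accepting branch all these runs are then completing runs of $J$ and hence all answer according to the single canonical order, so the simulation is faithful. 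Once this consistency is secured, what remains is the standard bounded-pebble encoding of a logspace configuration together with the finite control, and the theorem follows.
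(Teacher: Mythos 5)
Your proposal is correct and follows essentially the same route as the paper's (much terser) proof sketch: traverse with $J$ and watch for the incidence of $\curr$ with $t$ to decide co-st-connectivity, then use the canonical enumeration order to realise logspace counters and an encoding of nodes as numbers for the \nlogspace{} simulation. Your additional care about the asymmetric acceptance condition and about the mutual consistency of repeated nondeterministic re-runs of the ordering automaton is exactly the detail the paper leaves implicit, not a departure from its argument.
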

\begin{proof}[sketch]
To decide co-st-connectivity simply traverse the graph and see whether $\curr$ ever reachs $\targetnode$. To decide an \nlogspace-property first note that by repeatedly cycling $\curr$ through the graph (the component reachable from $startnode$ to be precise) we can count up to its size. This allows us to simulate logarithmically sized worktapes as counters. Since the graph is not merely traversable but orderable we can define an encoding of graph nodes as numbers thus allowing us to simulate any \nlogspace Turing machine. 
\end{proof}
\subsection{Grid Graphs}

Let the family of groups $\grid(d,\ell)$ indexed by $d,l\in\mathbb{N}$ be
$(\mathbb{Z}/l\mathbb{Z})^d$. ~So $\grid(d,\ell)$ has $d$ commuting
generators (say, $m_1,\ldots,m_d$) each of order $\ell$. Cook and Rackoff
\cite{jag} showed that deterministic $d$-\jags are unable to traverse
these graphs and thus were able to conclude that deterministic \jags
cannot in general decide st-connectivity for undirected graphs.

We show that $\CG(\grid(d,\ell))$ is traversable. Every node in $\CG(\textit{\grid}(d,\ell))$ can be reached from $\startnode$ by a unique path of the form $m_1^{i_1}, : \cdots:m_d^{i_d}$ where $0 \leq  i_j < \ell$.  To ensure that $\curr$ visits every node, 
retrace nondeterministically a path $\rho$ 
of that form  to the current position $\curr$ using a helper pebble $x$. In tandem with the helper pebble moving towards $\curr$ move another helper pebble $y$ along the path that is lexicographically the successor of  $\rho$. \\
If $\rho = m_1^{i_1}: \cdots: m_d^{i_d} $  where, $(0 \leq  i_j < l) ~\wedge~ (i_k < l-1)~\wedge~ (\exists k. \forall j>k.~ i_j = l-1) $ then the lexicographically next path $\rho' =  m_1^{i_1}: \cdots: m_k^{i_k+1}$. Then jump $\curr$ to the final position of $y$. 

The following algorithm places the pebble $\curr$ on every node of the input graph $\CG(\grid(d,\ell))$.
 We also use pebbles travelling around to store directions $k$ and $dd$. 

\bigskip

\begin{verbatim}
   pebbles: s, curtrace, curr, next, count (all at startnode)
   direction: k, dd      
      
      
   repeat
       %%find the next node to move curr    
          nondeterministically guess k:{1..d}          
          for dd = 1 to k
              count := s
              nondeterministically guess b:{true,false}
              while (b)
                 nondeterministically guess b:{true,false}
                 next := next.dd
                 count := count.dd                
                 if (count == s) fail %%checks that  i_j < ell  
          if (count.k == s) fail     %%checks that  i_k < ell-1         
               
       %%check that the "immediate" next node has been found
          curtrace := next   
          for dd = k to md     %% for j>k, i_j = ell-1 
              count := s.dd  
              while (count != s) 
                 curtrace := curtrace.dd
                 count := count.dd 
          if (curtrace != curr) fail
          curr := next
   while (curr != s)
\end{verbatim}
\medskip


\begin{corollary}
 \ndjags can simulate any \nlogspace algorithm on $\CG(\grid(d,\ell))$. 
\end{corollary}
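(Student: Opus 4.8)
The plan is to upgrade the traversability of $\CG(\grid(d,\ell))$, which the displayed algorithm already establishes, to full \emph{orderability}, and then simply to quote the earlier theorem, which states that on an orderable family \ndjags can decide any \nlogspace property of the subgraph reachable from $\startnode$. Since $\grid(d,\ell)=(\mathbb{Z}/\ell\mathbb{Z})^d$ is generated by $m_1,\dots,m_d$, its Cayley graph is connected, so the reachable subgraph is the whole graph and the theorem yields the corollary outright. Thus the only genuine work is to verify the uniqueness-of-order clause in the definition of orderability.

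First I would fix the canonical addressing used in the traversal: every node is reached from $\startnode$ by a unique path in normal form $m_1^{i_1}\!:\cdots:m_d^{i_d}$ with $0\le i_j<\ell$, identifying nodes with tuples $(i_1,\dots,i_d)$ and endowing them with the lexicographic order, whose least element is $\startnode=(0,\dots,0)$. The key observation is that in the algorithm the pebble $\curr$ is only ever advanced by the single assignment $\curr:=\mathit{next}$, and this assignment is guarded: before it is executed, the pebble $\mathit{next}$, whose address was built up by the nondeterministic guesses, is retraced to its purported predecessor (accumulated in $\mathit{curtrace}$), and the run is aborted by a \texttt{fail} unless that predecessor coincides with the current $\curr$. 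Consequently, on any accepting run, $\mathit{next}$ must be exactly the lexicographic successor of the current $\curr$; the nondeterminism affects only how this successor is located, never which node it is. Hence the sequence of $\curr$-positions produced by every accepting computation is the same, namely the lexicographic enumeration $(0,\dots,0),(0,\dots,0,1),\dots,(\ell-1,\dots,\ell-1)$, and it visits each node exactly once. This is precisely orderability.

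Having established that $\CG(\grid(d,\ell))$ is orderable, I would invoke the second half of the earlier theorem (orderable $\Rightarrow$ any \nlogspace property is \ndjag-decidable on the reachable subgraph) to conclude that \ndjags can simulate any \nlogspace algorithm on these Cayley graphs; concretely, the fixed visiting order supplies the encoding of nodes as numbers needed to simulate logarithmically sized worktapes as counters.

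The hard part is the middle step, and specifically showing that the \texttt{fail}-guards are simultaneously \emph{sound} and \emph{complete}: sound, in that no accepting run can set $\mathit{next}$ to anything but the true lexicographic successor of $\curr$; and complete, in that the one correct successor-move does survive all guards, so that at least one accepting computation exists and orderability is not vacuous. This reduces to a carry/odometer analysis around the rightmost coordinate $k$ with $i_k<\ell-1$: one must check that guessing this $k$ together with the coordinates of $\mathit{next}$ lets the inner loops zero out the coordinates above $k$, while the retracing loop reconstructs the predecessor by decrementing coordinate $k$ and restoring the coordinates above $k$ to $\ell-1$, using reversibility of the generators so the verification pebbles can run the move backwards. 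Uniqueness of the normal form for grid groups is what makes the lexicographic successor well defined and thereby forces distinct accepting runs to agree on the visiting order.
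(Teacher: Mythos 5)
Your proposal is correct and follows essentially the same route as the paper: establish that the grid graph is \emph{orderable} via the unique canonical paths $m_1^{i_1}\!:\cdots:m_d^{i_d}$ and the verifiable lexicographic successor relation enforced by the \texttt{fail}-guards, then invoke the theorem that orderability yields decidability of every \nlogspace property. The paper leaves the soundness/completeness of the guards implicit, so your explicit carry/odometer analysis is a faithful elaboration rather than a departure.
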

The algorithm is based on the fact that between any two nodes, there exists a unique, verifiable path $\rho$ of the form $\rho = m_1^{i_1}: \cdots: m_d^{i_d} $  where, $(0 \leq  i_j < \ell) ~\wedge~ (i_k < \ell-1)~\wedge~ (\exists k. \forall j>k.~ i_j = \ell-1) $,  and that between paths of this form, we can define a verifiable total order (the lexicographic order). Using this notion of  verifiable and orderable canonical paths, we can generalize this algorithm to other graphs. 

A predicate $R$ over paths of a graph is \emph{verifiable} if given nodes $u$, $v$, and $w$, an \ndjag can determine if there is a path $\rho$ from $u$ to $v$ with $R(\rho)$ via $w$. A total order $O$ over such a predicate $R$ over paths is \emph{verifiable} if given nodes $u$, $v$, and $w$, an \ndjag can determine if some path $\rho'$ in $R$ from $u$ is passes through $w$ and $\rho'$ is the next path of $\rho$ by $O$ where $\target(u,\rho)=v$. Given a graph $G$, define a predicate \emph{canonical path} over paths such that it is verifiable, and for any node $u$, there is exactly one canonical path from \startnode~  to $u$; and define a verifiable total order over canonical paths. 
Given any node, an \ndjag can trace the canonical path (from \startnode) to it, as well as the next canonical path. $G$ is then orderable by repeating this.

In the rest of this paper we will generalise this method first to Cayley graphs of \emph{arbitrary} abelian groups and subsequently to selected non-abelian groups.

\subsection{Abelian groups}
 Here, we show that Cayley graphs of abelian groups no matter how presented can be ordered by an \ndjag. Thus, let $G$ be an abelian group 
with generators $g_1,\ldots,g_d$ . Let $e$ be the maximum order of the $g_i$. If $X\subseteq G$ we denote $\langle X\rangle$ the subgroup generated by $X$. 

\medskip

Let $\poly$ stand for a fixed but arbitrary polynomial. 
\begin{lemma} An \ndjag can count till $\poly(e)$. \end{lemma}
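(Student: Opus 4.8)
The plan is to build a register counter out of pebbles travelling around the monochromatic cycles of the Cayley graph. First I would isolate a single-digit counter of capacity $e$. Recall that with the convention $\rho(v,i)=m_i v$, repeatedly following the edge labelled $i$ from any node $v$ visits $v, m_i v, m_i^2 v,\dots$ and returns to $v$ after exactly $\mathrm{ord}(m_i)$ steps; hence the $i$-labelled edges decompose the graph into disjoint simple cycles of length $\mathrm{ord}(m_i)$. Choosing the index $i$ with $\mathrm{ord}(m_i)=e$, the cycle through $\startnode=1_G$ has length $e$. I would keep one reference pebble fixed at $\startnode$ and move a second pebble $x$ one step along edge $i$ per increment; the counter then holds the number of steps taken, and overflow (wrap from $e-1$ back to $0$) is detected precisely when $\node(x)$ coincides with the reference pebble, which the \ndjag reads off from the incidence relation $\pi$ supplied to its transition function. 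This yields a reliable counter with values $0,\dots,e-1$ using a constant number of pebbles and finitely many control states.

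To reach capacity $\poly(e)$ I would stack $k$ such digits into a base-$e$ odometer, taking $k=\deg(\poly)+1$ so that $e^k\ge\poly(e)$ for all but finitely many $e$; the finitely many remaining (bounded) values of $e$ require only counting up to a constant, which the finite control does unaided. Each of the $k$ digits is a pebble on the same length-$e$ cycle, sitting at $m_i^{a_j}$ to encode its digit $a_j$. Incrementing the odometer means incrementing the least significant digit and, whenever a digit wraps (its pebble returns to $\startnode$, the wrap-to-$0$ happening automatically since $m_i^{e}=1_G$), propagating a carry to the next digit; this is finite-state carry logic over the constantly many digit pebbles. The combined register then ranges over $0,\dots,e^k-1$, which contains $\{0,\dots,\poly(e)\}$.

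The one genuinely delicate point, and the main obstacle, is that a single uniform \ndjag does not know \emph{a priori} which generator attains the maximal order $e$, and it cannot simply measure the orders first, since that is the very counting ability we are trying to establish. I would resolve this in one of two ways. Either the machine guesses the index $i$ nondeterministically: under the existential acceptance condition it suffices that the branch guessing a maximal-order generator succeeds, provided wrong guesses fail safely rather than accept spuriously. Or, to make the counter's capacity independent of any guess, I would replace the single cycle by a mixed-radix odometer using \emph{all} $d$ generators at once: digit $j$ rides the length-$\mathrm{ord}(m_j)$ cycle of edge $j$, so one full sweep already counts up to $\prod_{j}\mathrm{ord}(m_j)$. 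Since $G$ is abelian and generated by $m_1,\dots,m_d$, the surjection $(a_1,\dots,a_d)\mapsto m_1^{a_1}\cdots m_d^{a_d}$ forces $\prod_j\mathrm{ord}(m_j)\ge |G|\ge e$, and stacking $k$ such sweeps gives capacity $\bigl(\prod_j\mathrm{ord}(m_j)\bigr)^k\ge e^k\ge\poly(e)$ with no order-finding at all. The remaining work is pure bookkeeping: checking that the overflow and carry tests based on pebble coincidence are correct, and that the whole apparatus stays within a fixed finite control and a constant number of pebbles.
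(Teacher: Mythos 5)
Your proposal is correct, and its primary route is the paper's own: use the cycle of the maximal-order generator through $1_G$ as a base-$e$ digit (one pebble per digit, a reference pebble at $\startnode$ to detect wraparound via the incidence relation), and stack a constant number $k$ of such digits to reach $e^k\ge\poly(e)$. The paper states this in three lines and leaves the odometer bookkeeping implicit; your elaboration of the overflow test and carry logic is exactly what is meant.

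The one place you diverge is the ``delicate point'' of locating the maximal-order generator. You treat this as a genuine obstacle, but it is milder than you suggest: comparing $\mathrm{ord}(m_i)$ with $\mathrm{ord}(m_j)$ needs no counting at all --- race two pebbles in lockstep from $\startnode$, one along edge $i$ and one along edge $j$, and observe which returns to the reference pebble first. This is a deterministic finite-state comparison, and iterating it over the $d$ generators is what the paper's phrase ``scanning through the generators'' denotes. That said, both of your workarounds are sound, and the second is a genuinely different and arguably cleaner construction: the mixed-radix odometer over \emph{all} generators, with capacity $\prod_j\mathrm{ord}(m_j)\ge|G|\ge e$ per sweep (the inequality holding because $G$ is abelian and generated by the $m_j$), avoids any order-finding or guessing and even works deterministically. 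What it costs is $d$ pebbles per sweep rather than one, which is still a constant for a $d$-\ndjag, so nothing is lost. Either version establishes the lemma.
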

\begin{proof}
 By scanning through the generators, identify the one with the
maximum order, say $g_m$, and keep a pebble there. Clearly, by moving a pebble along the direction corresponding to $g_m$ one can count till $e$. Doing the same with more pebbles we can then
count till $\poly(e)$.
\end{proof}

\medskip
\begin{definition} For each $i$, let $e_i$ be the order of $g_i$ in the factor group $G / \langle g_1,\ldots,g_{i-1}\rangle$. So, $e_1$ is the order of $g_1$ in $G$, $e_2$ is the least $t$ so that $g_2^t$ can be expressed in terms of $g_1$; $e_3$ is the least $t$
so that $g_3^t$ can be expressed in terms of $g_1$ and $g_2$.
\end{definition}

A path $w = g_1^{t_1},\ldots, g_n^{t_n}$ is canonical 
if each $t_i < e_i$.

\medskip
\begin{lemma} For every element of $G$ there is a unique canonical path reaching it.\end{lemma}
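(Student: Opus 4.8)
The plan is to reduce the statement to the structure of the subgroup chain $H_0\subseteq H_1\subseteq\cdots\subseteq H_d$ defined by $H_i=\langle g_1,\ldots,g_i\rangle$, so that $H_0=\{1_G\}$ and $H_d=G$. Since $G$ is abelian, every $H_{i-1}$ is normal and $H_i=H_{i-1}\langle g_i\rangle$, hence the quotient $H_i/H_{i-1}$ is cyclic, generated by the image of $g_i$. By definition $e_i$ is the order of $g_i$ in $G/H_{i-1}$, i.e.\ the least $t>0$ with $g_i^{t}\in H_{i-1}$; this coincides with the order of the image of $g_i$ in $H_i/H_{i-1}$, so that $e_i=[H_i:H_{i-1}]$. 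Because $G$ is abelian, the node reached from $\startnode=1_G$ by a path $g_1^{t_1},\ldots,g_d^{t_d}$ is $\prod_i g_i^{t_i}$, independently of the order in which the steps are taken; thus a canonical path is precisely a tuple $(t_1,\ldots,t_d)$ with $0\le t_i<e_i$, together with this product.

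First I would prove, by induction on $i$, the stronger claim that every element of $H_i$ is reached by exactly one tuple $(t_1,\ldots,t_i)$ with $0\le t_j<e_j$. The base case $i=0$ is trivial, since $H_0=\{1_G\}$ and the empty product is its unique representative. For the inductive step, take $h\in H_i=H_{i-1}\langle g_i\rangle$. The image of $h$ in the cyclic group $H_i/H_{i-1}$ equals $\bar g_i^{\,t_i}$ for a unique $t_i\in\{0,\ldots,e_i-1\}$, uniqueness holding because $\bar g_i$ has order $e_i$. Then $g_i^{-t_i}h\in H_{i-1}$, which by the induction hypothesis has a unique canonical representation $\prod_{j<i}g_j^{t_j}$; multiplying by $g_i^{t_i}$ gives $h=\prod_{j\le i}g_j^{t_j}$ with all exponents in range, which settles existence. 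For uniqueness, if two canonical tuples give the same element of $H_i$, then projecting to $H_i/H_{i-1}$ forces the two values of $t_i$ to be congruent mod $e_i$, hence equal since both lie in $\{0,\ldots,e_i-1\}$; cancelling $g_i^{t_i}$ reduces to an equality in $H_{i-1}$, where the induction hypothesis completes the argument. Taking $i=d$ yields the lemma.

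As a consistency check I would observe that the number of canonical paths is $\prod_{i=1}^d e_i=\prod_{i=1}^d[H_i:H_{i-1}]=|G|$ by telescoping of indices, so here existence and uniqueness are in any case equivalent, and this explains why the canonical representatives exhaust $G$. The only genuine subtlety, and the one point I would check carefully, is the identification $e_i=[H_i:H_{i-1}]$: one must verify that the order of $g_i$ in the full quotient $G/H_{i-1}$ equals its order in $H_i/H_{i-1}$, which holds because the natural map $H_i/H_{i-1}\hookrightarrow G/H_{i-1}$ is injective and its image is $\langle\bar g_i\rangle=H_i/H_{i-1}$. Everything else is routine coset bookkeeping, and the abelian hypothesis enters only to guarantee that each $H_{i-1}$ is normal (so the quotients exist) and that exponents may be reordered freely.
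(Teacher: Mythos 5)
Your proof is correct and rests on the same essential fact as the paper's: that $e_i$ is the order of $g_i$ modulo $\langle g_1,\ldots,g_{i-1}\rangle$, so the exponent $t_i$ is determined by projecting to the quotient by the earlier generators. The paper packages this as a direct argument (divide one canonical word by the other, look at the largest index with nonzero exponent, contradict minimality of $e_m$), while you package it as an induction along the chain $H_0\subseteq\cdots\subseteq H_d$ with the bonus identity $\prod_i e_i=|G|$; these are the same argument in different clothing.
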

\begin{proof} Clearly, every element of $G$ can be reached by a canonical path (``canonize" from $g_n$ downward). As for uniqueness suppose that
                 $g_1^{t_1},\ldots,g_n^{t_n} = g_1^{u_1},\ldots,g_n^{u_n}$
and $t_i, u_i < e_i$. Dividing by the RHS using commutativity
and reducing mod $e_i$ we obtain
                 $g_1^{t_1},\ldots,g_m^{t_m} = e$
(for different $t_i < e_i$) where $m\leq n$ and $t_m \neq 0$. This, however,
implies that $g_m^{t_m}$ can be expressed in terms of $g_1,\ldots,g_{m-1}$, a
contradiction.
\end{proof}
\newcommand{\gmax}{\textit{gmax}}
\newcommand{\nmax}{\textit{nmax}}
\begin{definition}
For each $1\leq i\leq n$ define  $\textit{gmax}_i$ as
$g_1^{e_1-1}:\cdots:g_n^{e_i-1}$
and $\textit{nmax}_i$ as the length of the canonical path leading to $\gmax_i$, i.e.,
$\nmax_i = e_1-1 +\cdots + e_i-1$
\end{definition}

\medskip
\begin{lemma}Let $w = g_1^{t_1},\ldots,g_i^{t_i}$ be a path reaching $\gmax_i$ and  of total
length $\nmax_i$, i.e., $t_1+\cdots+t_i = \nmax_i$. Then $w$ is the canonical path to $\gmax_i$ and thus $t_j = e_j-1$ for all $j$.
\end{lemma}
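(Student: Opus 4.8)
The plan is to argue by downward induction on the top generator occurring in $w$, peeling off $g_i$ and passing to the quotient $G/\langle g_1,\ldots,g_{i-1}\rangle$ at each stage. Write $H_{i-1}=\langle g_1,\ldots,g_{i-1}\rangle$, so that by definition $e_i$ is the order of the image $\bar g_i$ of $g_i$ in $G/H_{i-1}$, and note the bookkeeping identity $\nmax_i=\nmax_{i-1}+(e_i-1)$. Throughout I use that $G$ is abelian, so exponents may be rearranged freely and cancellation is legitimate.

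The first step is to pin down $t_i$ modulo $e_i$. Projecting the identity $g_1^{t_1}\cdots g_i^{t_i}=\gmax_i=g_1^{e_1-1}\cdots g_i^{e_i-1}$ into $G/H_{i-1}$ annihilates $g_1,\ldots,g_{i-1}$ and leaves $\bar g_i^{\,t_i}=\bar g_i^{\,e_i-1}$. Since $\bar g_i$ has order $e_i$, this forces $t_i\equiv e_i-1\pmod{e_i}$, and as $t_i\ge 0$ we obtain $t_i\ge e_i-1$, with equality precisely when $t_i<e_i$. The base case $i=1$ is then immediate, since $g_1^{t_1}=g_1^{e_1-1}$ together with $t_1=\nmax_1=e_1-1$ gives $t_1=e_1-1$ directly.

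The clean inductive case is $t_i=e_i-1$. Here I cancel $g_i^{e_i-1}$ from both sides to get $g_1^{t_1}\cdots g_{i-1}^{t_{i-1}}=\gmax_{i-1}$, a path of the same restricted form reaching $\gmax_{i-1}$, and of total length $\nmax_i-(e_i-1)=\nmax_{i-1}$. The induction hypothesis applies and yields $t_j=e_j-1$ for all $j<i$, completing the step. So the whole argument reduces to \emph{excluding} the case $t_i=e_i-1+k e_i$ with $k\ge 1$.

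This exclusion is the main obstacle. When $k\ge 1$ the factor $g_i^{k e_i}=(g_i^{e_i})^k$ lies in $H_{i-1}$, and after cancelling $g_i^{e_i-1}$ the residual path $g_1^{t_1}\cdots g_{i-1}^{t_{i-1}}$ reaches $\gmax_{i-1}\cdot(g_i^{e_i})^{-k}$ rather than $\gmax_{i-1}$, so the induction hypothesis no longer applies verbatim. One wants to show such an overshoot forces the total length strictly above $\nmax_i$, but the carry $g_i^{e_i}$ can have a long canonical representation in $g_1,\ldots,g_{i-1}$, so in principle it may be ``paid for'' cheaply by the lower coordinates. That this genuinely can happen is seen in $\ZZ/6\ZZ$ with $g_1=2$, $g_2=5$, where $e_1=3$, $e_2=2$, $\gmax_2=3$, $\nmax_2=3$, and the non-canonical path $g_2^3$ already reaches $\gmax_2$ with the minimal length $3$. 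The delicate point is therefore that the statement needs the later generators not to introduce such carries: it goes through without modification once one either restricts to canonical-form paths (each $t_j<e_j$, under which the claim is immediate from the sum constraint), or arranges the filtration $H_1\subset\cdots\subset H_n$ so that every $g_i^{e_i}=1$, i.e.\ an honest direct sum with no carries — exactly the situation of the grid graphs $\grid(d,\ell)$. I would accordingly carry out the induction above and spend the main effort on controlling, or eliminating by a suitable choice of generating filtration, the carry contributed when $t_i\ge e_i$.
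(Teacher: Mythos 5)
Your analysis is correct, and it in fact uncovers a genuine error in the paper rather than a gap in your own argument. The paper's entire proof of this lemma is the assertion that ``by uniqueness of canonical paths the exponents $t_i$ are uniquely determined mod $e_i$,'' from which the claim is said to follow directly. As you observe, uniqueness of canonical paths only pins down the \emph{top} exponent modulo $e_i$ (by passing to $G/\langle g_1,\dots,g_{i-1}\rangle$); the lower exponents are not determined modulo $e_j$ by the group element alone, because when $t_i\ge e_i$ the element $g_i^{e_i}$ is merely an element of $\langle g_1,\dots,g_{i-1}\rangle$, not $1_G$, and its residue perturbs the lower coordinates. Your counterexample is valid: in $\ZZ/6\ZZ$ with $g_1=2$, $g_2=5$ one has $e_1=3$, $e_2=2$, $\gmax_2=3$, $\nmax_2=3$, and the path $g_2^3$ reaches $\gmax_2$ with total length exactly $\nmax_2$ while $t_1=0\neq e_1-1$ and $t_2=3\neq e_2-1$. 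So the lemma as stated is false, and the paper's one-line proof asserts precisely the statement that your example refutes.

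The parts of your induction that do go through (the base case, the determination of $t_i$ modulo $e_i$ via the quotient, and the clean case $t_i=e_i-1$) are exactly right, and your diagnosis of where any proof must fail --- the carry $g_i^{ke_i}\in\langle g_1,\dots,g_{i-1}\rangle$ being ``paid for'' by shorter lower blocks --- is the correct one. Note, though, that your first proposed repair (restricting to paths with $t_j<e_j$) reduces the lemma to an immediate consequence of uniqueness of canonical paths but does not salvage the paper's subsequent use of it: the later lemma reconstructing $\gmax_j$ and $\nmax_j$ nondeterministically guesses an \emph{arbitrary} path of the form $g_1^*\cdots g_i^*$ of length $\nmax_i$ and reads off block boundaries, and on the $\ZZ/6\ZZ$ example the guess $g_2^3$ would yield the wrong values $\gmax_1=1_G$ and $\nmax_1=0$. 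What the construction actually needs is your second repair or something equivalent: either arrange the generating filtration so that $g_i^{e_i}=1_G$ for all $i$, as in the grid graphs, or give the \ndjag an additional means of certifying that each guessed block has length strictly less than $e_j$.
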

\begin{proof}
 By uniqueness of canonical paths the exponents $t_i$ are uniquely
determined mod $e_i$. The claim follows directly from that.
\end{proof}

\medskip
\begin{lemma} Given $\nmax_i$ and $\gmax_i$ we can reconstruct $\nmax_j$ and $\gmax_j$ for all $j<i$.
\end{lemma}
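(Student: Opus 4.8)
The plan is to reconstruct all of $\gmax_j$ and $\nmax_j$ for $j\le i$ at once, by having the \ndjag retrace the unique canonical path leading from $\startnode$ to $\gmax_i$ and reading off the block boundaries. Recall that this canonical path is $g_1^{e_1-1}:\cdots:g_i^{e_i-1}$; after the traversing pebble completes the initial blocks $g_1^{e_1-1}:\cdots:g_j^{e_j-1}$ it sits exactly on $\gmax_j$, and the number of edges it has traversed so far is exactly $\nmax_j$. So it suffices to trace this path while maintaining a running step counter, dropping a dedicated pebble and recording the counter value at each switch between consecutive generator blocks.

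The algorithm itself is a guess-and-verify. The \ndjag moves a helper pebble out of $\startnode$, first along a nondeterministically chosen number $t_1\ge 0$ of $g_1$-edges, then $t_2\ge 0$ of $g_2$-edges, and so on up to $t_i$ of $g_i$-edges, incrementing a counter at every step and marking the pebble's position at each switch from $g_j$ to $g_{j+1}$. On finishing the $g_i$-block it checks, using pebble incidence, that the helper pebble coincides with the given pebble on $\gmax_i$, and, by comparing counters, that the total number of steps equals the given value $\nmax_i$. The key point is that by the preceding Lemma any blocked path $g_1^{t_1}:\cdots:g_i^{t_i}$ that reaches $\gmax_i$ and has total length $\nmax_i$ must satisfy $t_j=e_j-1$ for every $j$, i.e.\ must be the canonical path. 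Hence every accepting run traces precisely the canonical path, so its marked positions are exactly $\gmax_1,\dots,\gmax_i$ and its recorded counter values are exactly $\nmax_1,\dots,\nmax_i$; and at least one accepting run exists, namely the genuine canonical traversal.

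The main obstacle is the usual asymmetry of nondeterministic acceptance: an \ndjag cannot directly test the defining inequalities $t_j<e_j$ of canonicity, both because the $e_j$ are not known in advance and because such a test is co-nondeterministic. The previous Lemma is exactly what circumvents this, replacing the unverifiable conjunction of inequalities by a single positively checkable condition (endpoint $\gmax_i$ together with total length $\nmax_i$), and this is what makes the construction go through. The remaining points are routine bookkeeping: the running count never exceeds $\nmax_i\le \poly(e)$, so it fits in a counter that an \ndjag can maintain, and since the number $i\le d$ of blocks is a fixed constant for the family, recording the $O(d)$ boundary pebbles and the $O(d)$ values $\nmax_j$ uses only constantly many pebbles and counters, well within the fixed budget of the machine.
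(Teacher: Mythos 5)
Your proposal is correct and follows essentially the same route as the paper: nondeterministically guess a blocked path $g_1^{*}:\cdots:g_i^{*}$ to $\gmax_i$, verify its total length is $\nmax_i$ (which by the preceding lemma forces it to be the canonical path), and read off the $\gmax_j$ and $\nmax_j$ at the block boundaries along the way. Your write-up just makes explicit the appeal to the preceding lemma and the bookkeeping that the paper leaves implicit.
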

\begin{proof}
 We nondeterministically guess a path of the form $g_1^*,\ldots,g_i^*$ to $\gmax_i$ and check that its length is indeed $\nmax_i$. On its
way it passes through $\gmax_j$ for $j<i$ allowing us to read off all
intermediate values $\nmax_j$.
\end{proof}

\medskip
\begin{lemma}
 Given $\nmax_i$ and $\gmax_i$ we can enumerate the subgroup generated by $g_1,\ldots,g_i$.
 \end{lemma}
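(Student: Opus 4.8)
The plan is to enumerate the subgroup $H=\langle g_1,\ldots,g_i\rangle$ by exactly the method used for the grid graphs, but with the per-coordinate bounds now read off from the data $\gmax_i,\nmax_i$ instead of from a uniform modulus. By the uniqueness lemma the map $(t_1,\ldots,t_i)\mapsto g_1^{t_1}\cdots g_i^{t_i}$ with $0\le t_j<e_j$ is a bijection onto $H$, so enumerating $H$ is the same as producing each canonical tuple exactly once; I would do this in lexicographic order, starting at $\startnode$ (the all-zero tuple) and ending at $\gmax_i$ (the tuple $(e_1-1,\ldots,e_i-1)$), moving a dedicated pebble $\curr$ from one element to the next. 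Throughout, a helper pebble maintains a counter, which by the counting lemma may range up to $\poly(e)$, and every length occurring below is bounded by $\nmax_i\le d(e-1)=\poly(e)$, so it stays within the counting budget.

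The crux is an \ndjag\ subroutine that \emph{verifies canonicity} of a traced path $w=g_1^{t_1}\cdots g_i^{t_i}$ (nonnegative exponents, ending at a node $v$) using only $\gmax_i$ and $\nmax_i$. The device is a \emph{complement to} $\gmax_i$: guess a continuation $w'=g_1^{s_1}\cdots g_i^{s_i}$ from $v$ with $s_j\ge 0$, and accept iff the endpoint of $w'$ coincides with the pebble on $\gmax_i$ and $\mathrm{len}(w)+\mathrm{len}(w')=\nmax_i$. Applying the max-length lemma to the canonical-shape path $g_1^{t_1+s_1}\cdots g_i^{t_i+s_i}$, which in the abelian group has the same endpoint $\gmax_i$ and the same total length $\nmax_i$, forces $t_j+s_j=e_j-1$ for every $j$; since $s_j\ge 0$ this is achievable exactly when $t_j<e_j$ for all $j$, i.e.\ exactly when $w$ is canonical. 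Thus a single reachability-plus-length check, which an \ndjag\ can guess and verify, decides canonicity.

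It then remains to step from one canonical tuple to its lexicographic successor, and I would do this as in the grid algorithm. Using the reconstruction lemma I recover $\gmax_j,\nmax_j$ and hence each maximum $e_j-1=\nmax_j-\nmax_{j-1}$ for $j\le i$; then, tracing the canonical path to $\curr$ in tandem with a guessed candidate path from $\startnode$, I check the standard increment pattern — agreement on a prefix, a $+1$ at a pivot coordinate $k$, and all suffix coordinates of $\curr$ maximal while those of the candidate are zero. The ``suffix maximal'' test is again a complement-to-$\gmax_i$ check, now restricted to the generators $g_1,\ldots,g_k$ (a zero complement beyond $k$ means $t_j=e_j-1$ for $j>k$). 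Verifying the candidate is canonical guarantees it is a genuine element of $H$, and the increment test guarantees it is the next one; $\curr$ jumps to it and the loop repeats until $\curr$ reaches $\gmax_i$, at which point every element of $H$ has carried $\curr$ exactly once.

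The main obstacle is precisely the canonicity test: an \ndjag\ cannot directly decide ``$t_j<e_j$'' because $e_j$ is the order of $g_j$ in a quotient group and is invisible locally. The complement-to-$\gmax_i$ construction together with the max-length lemma is what converts this global modular condition into a local, verifiable statement; the remaining work — arranging the tandem trace and the suffix-maximal tests so that all counters stay below $\poly(e)$ — is comparatively routine.
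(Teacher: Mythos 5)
Your proposal is correct and follows essentially the same route as the paper: obtain a verifiable canonicity test for paths of the shape $g_1^{t_1}\cdots g_i^{t_i}$ from the data $\gmax_i,\nmax_i$ (the paper does this via the reconstruction lemma; your complement-to-$\gmax_i$ check combined with the max-length lemma is a concrete and valid realization of the same idea), and then enumerate by lexicographic successor exactly as in the grid-graph algorithm. The paper's proof is a two-sentence sketch of precisely this plan, so your write-up is just a more detailed version of it.
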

 \begin{proof}
 Using the previous lemma we can check whether a path that remains in this subgroup is canonical. This allows us to enumerate the subgroup as in the case of grid graphs described earlier.
\end{proof}

\medskip
\begin{lemma} Given $\nmax_i$ and $\gmax_i$ we can compute $\nmax_{i+1}$ and $\gmax_{i+1}$.\end{lemma}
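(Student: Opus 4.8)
The plan is to reduce the whole computation to determining the single number $e_{i+1}$, the order of $g_{i+1}$ in $G/\langle g_1,\ldots,g_i\rangle$, i.e.\ the least $t\ge 1$ with $g_{i+1}^t\in H_i$, where $H_i=\langle g_1,\ldots,g_i\rangle$. Once $e_{i+1}$ is in hand, everything else is bookkeeping: by definition of canonical paths the largest admissible exponent of $g_{i+1}$ is $e_{i+1}-1$, so $\gmax_{i+1}=\gmax_i\cdot g_{i+1}^{e_{i+1}-1}$ and $\nmax_{i+1}=\nmax_i+(e_{i+1}-1)$. Concretely I would run an outer loop on $t=1,2,\ldots$ that advances an auxiliary pebble one step along direction $i+1$ starting from $\gmax_i$ and increments a counter initialised at $\nmax_i$; when $t$ first equals $e_{i+1}$ the auxiliary pebble sits on $\gmax_{i+1}$ and the counter reads $\nmax_{i+1}$, which we then record on the dedicated pebbles.

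First I would check that the loop stays within the counting budget. The order of $g_{i+1}$ in the quotient $G/H_i$ divides its order in $G$, which is at most the maximal generator order $e$; hence $e_{i+1}\le e$. By the counting lemma an \ndjag can count up to $\poly(e)$, so both the loop variable $t$ and the running value $\nmax_i+t$ (which is bounded by $\nmax_{i+1}\le d(e-1)$) remain representable. Thus termination of the loop at $t=e_{i+1}$ is guaranteed and every intermediate quantity is polynomial in $e$.

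The core subroutine is the membership test ``is $g_{i+1}^t\in H_i$?''. For this I would place a pebble $z$ on the node $g_{i+1}^t$ (reached from $\startnode$ by $t$ moves along direction $i+1$) and invoke the enumeration of $H_i$ furnished by the previous lemma from the data $\nmax_i,\gmax_i$. The key point is that this enumeration is \emph{complete}: on every accepting computation its roving pebble is placed on exactly the elements of $H_i$. Consequently the predicate ``did the roving pebble ever coincide with $z$?'' has a value independent of the nondeterministic choices, and it answers yes precisely when $g_{i+1}^t\in H_i$ (soundness and completeness both coming from the fact that the enumerated set is exactly $H_i$). Running $t=1,2,\ldots$ and halting at the first $t$ reported as a member therefore stops at $t=e_{i+1}$ on every accepting run.

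The main obstacle is not the group theory but the interaction between nondeterminism and the demand that the computed value be \emph{the same} on all accepting computations, as required for orderability rather than mere traversability. A naive test that simply \emph{guesses} a canonical witness path $g_1^{s_1}\cdots g_i^{s_i}$ for $g_{i+1}^t$ could, on some accepting branch, fail to hit an existing witness and so report a spuriously large $e_{i+1}$. Routing the test instead through the \emph{complete} enumeration of $H_i$---so that coincidence with $z$ is forced whenever $g_{i+1}^t\in H_i$---is exactly what removes this ambiguity, and is where the previously established enumeration lemma (together with uniqueness of canonical paths) does the real work. With that in place, the least-$t$ search is deterministic in outcome, and the final assignment of $\gmax_{i+1},\nmax_{i+1}$ is immediate.
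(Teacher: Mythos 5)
Your proposal matches the paper's proof: both compute $x_t := g_{i+1}^t$ for $t=1,2,\ldots$, use the preceding enumeration lemma to find the least $t$ with $x_t\in\langle g_1,\ldots,g_i\rangle$, and then read off $\gmax_{i+1}$ and $\nmax_{i+1}=\nmax_i+t-1$. Your additional observations --- that $e_{i+1}\le e$ keeps the loop within the counting budget, and that routing the membership test through the \emph{complete} enumeration (rather than a one-sided nondeterministic guess) is what makes the least-$t$ outcome identical on all accepting runs --- are correct elaborations of points the paper leaves implicit.
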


\begin{proof}
 Compute $x_t := g_{i+1}^t$ for $t=1,2,\ldots$ and after each step check using the previous lemma determine the least $t$ so that $x_t$ is in the subgroup generated by $g_1,\ldots,g_i$. Then $\gmax_{i+1}=x_{t-1}$ and $\nmax_{i+1}=\nmax_i+t-1$.
\end{proof}

\medskip
The following is now direct.
\begin{theorem} Any abelian  group can be traversed by an \ndjag. \end{theorem}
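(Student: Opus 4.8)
The plan is to assemble the preceding chain of lemmas into a single induction on the generator index $i$. Since $g_1,\ldots,g_n$ generate $G$, the subgroup $\langle g_1,\ldots,g_n\rangle$ is all of $G$, so it suffices to produce an \ndjag that, once $\nmax_n$ and $\gmax_n$ are available, enumerates $\langle g_1,\ldots,g_n\rangle$. That is precisely the enumeration guaranteed by the lemma stating that, given $\nmax_i$ and $\gmax_i$, an \ndjag can enumerate the subgroup generated by $g_1,\ldots,g_i$, instantiated at $i=n$. Enumerating this subgroup means placing the dedicated pebble $\curr$ on every one of its nodes, and those are exactly the nodes reachable from $\startnode = 1_G$ in the Cayley graph; hence such an enumeration is a traversal in the sense of the definition.

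The first task is therefore to make $\nmax_n$ and $\gmax_n$ available. I would do this by an inner induction starting from the trivial base case $\gmax_0 = \startnode$ and $\nmax_0 = 0$ (the subgroup generated by the empty set is trivial), and then repeatedly invoking the lemma that computes $\nmax_{i+1}$ and $\gmax_{i+1}$ from $\nmax_i$ and $\gmax_i$ by finding the least $t$ with $g_{i+1}^t \in \langle g_1,\ldots,g_i\rangle$. Because the number of generators $n$ is fixed for a given family, this is a bounded loop of $n$ iterations realised by a fixed \ndjag subroutine, with the current index held in the finite control. At each stage $\gmax_i$ is recorded as the position of a dedicated pebble, and $\nmax_i$ as a counter value; the counting lemma guarantees that an \ndjag can maintain counters up to $\poly(e)$, and since $\nmax_i = \sum_{j \le i}(e_j-1) \le n(e-1)$ this stays within range.

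The step I expect to require the most care is the representation and maintenance of the counter $\nmax_i$ inside the JAG model, since a JAG has no explicit worktape and must encode a number as a pebble configuration, for instance as distances along the maximal-order direction $g_m$ used in the counting lemma. A point that keeps the pebble budget bounded is that one need not store all $n$ pairs simultaneously: by the reconstruction lemma the lower pairs $(\nmax_j,\gmax_j)$ for $j<i$ can always be recovered from the current top pair $(\nmax_i,\gmax_i)$ by guessing a canonical path of length $\nmax_i$ to $\gmax_i$ and reading off the intermediate markers, so only one pair is carried forward. What remains to verify is that a fixed supply of pebbles suffices to hold $\gmax_i$, the counter for $\nmax_i$, and the scratch pebbles consumed inside the two driving subroutines, and that the nondeterministic guesses used there accept only on correct data, so that \emph{every} accepting computation genuinely places $\curr$ on all nodes, as the definition of traversability demands. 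Granting this bookkeeping, the theorem follows directly from the enumeration lemma at $i=n$, as the authors indicate.
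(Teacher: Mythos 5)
Your proposal is correct and is precisely the assembly the paper intends: the paper's entire proof is the remark ``The following is now direct,'' meaning exactly your chain --- bootstrap from the trivial pair $(\nmax_0,\gmax_0)$, iterate the lemma computing $(\nmax_{i+1},\gmax_{i+1})$ from $(\nmax_i,\gmax_i)$ for the fixed number $n$ of generators, and then apply the subgroup-enumeration lemma at $i=n$, where $\langle g_1,\ldots,g_n\rangle=G$. Your additional bookkeeping observations (counters bounded by $\poly(e)$ via $\nmax_i\le n(e-1)$, and recovering lower pairs by the reconstruction lemma rather than storing them all) are consistent with, and usefully flesh out, what the paper leaves implicit.
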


\subsection{Families of groups}
 \textbf{Symmetric Group:} 
 Let $S(n)$ be the symmetric group of permutations of $1,\ldots,n$. This is generated by the generators $<cy,sw>$ where $sw$ is the permutation $(1,2)$ and $cy$ is the permutation $(2,3,\ldots,n,1)$.\\ 
 Canonical path: In $\CG(S(n),cy,sw)$, every node can be reached from $\startnode$ via a path of the form $p_n^{i_n},p_{n-1}^{i_{n-1}} \ldots p_2^{i_2}$ where $0< j, i_j < n $ and  $p_k = cy:(sw:cy)^{n-k}:cy^{k-1}$.  ($p_k$ is the permutation $(n-k+1,n-k+1,\ldots,n)$. Any permutation can be performed by repeating $p_1$ to place the desired element is in the first position, repeating  $p_2$ to place the desired element in the second position and so on.) 
\\

\noindent
\textbf{General linear group:}
The group of invertible matrices over finite fields with matrix multiplication as the group action. We take as generators the matrices that perform the following operations: ($\omega$)  multiplying the first column with the primitive element $\omega$ of the field; $(c_{1+2})$  adding the first column to the second; $(c_{12})$ permuting the first two columns; $(c_{cy})$ rotating the columns.

 Every invertible matrix can be transformed into a diagonal matrix with nonzero elements using row and column transformations. Thus every element in this group can be systematically  generated by generating all diagonal matrices, and for each one generated, perform each possible column transformations, and for each such transformation, perform each possible row transformations. This gives a canonical path if the transformations and diagonal matrices can be ordered. 
 
  Every column transformation (any permutation of the columns, multiplying any column with a  field element, adding the first column to another column) can be systematically generated using these generators (as in symmetric groups) using the fact that all nonzero elements of a field are generated by repeated multiplication of the primitive element ; row transformations can be simulated using column transformations (if $M$ is the operator for a particular column transformation, and the corresponding row transformation of matrix $A$ can be performed by $A.M$); diagonal matrices can be generated using repeated multiplication with the primitive element and row and column permutations. \\

\noindent
\textbf{Simple finite groups:}

Similar presentations exist for the other finite simple groups which allows their Cayley graphs to be ordered in an analogous fashion assuming a given presentation.

Furthermore, Guralnick et al \cite{guralnick} have shown that all nonabelian simple finite groups that are not \emph{Ree groups}, can be written using 2 generators and a constant number of relations. By embodying hard-coded versions of these relations it is possible to design a fixed \ndjag that can order any Cayley graph whose underlying group is one of those to which loc.cit.\ applies, irrespective of the presentation. Essentially, the \ndjag would nondeterministically guess the two generators and try out all of the hard-coded relations. Details will be given in an extended version of this paper.

Unfortunately, although any finite group can be decomposed into simple groups, it remains unclear how to use this, in order to traverse / order arbitrary Cayley graphs without prior knowledge about the underlying groups. 

\subsection{Product constructions}
\begin{theorem} If groups $G$ and $H$ are orderable, so are their direct and semidirect products. If graphs $G$ and $H$ are orderable, (and are compatible) so are their replacement product and zigzag product. Moreover, these constructions are uniform in $G,H$ thus lift to families. 
\end{theorem}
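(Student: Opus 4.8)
The plan is to reduce all four cases to the \emph{verifiable canonical path} framework used above for grids and abelian groups. In each product the vertex set projects onto the outer factor $G$ with fibres that are labelled copies of the inner factor $H$, and I would use the induced \emph{lexicographic} order: rank the fibre (outer coordinate) by $J_G$, and rank within a fibre (inner coordinate) by $J_H$. The orderer $J$ for the product then runs $J_G$ on the outer structure to visit fibres in $G$-order, and on reaching each fibre runs $J_H$ inside it to place $\curr$ on its nodes in $H$-order. Since $J_G,J_H$ enter only as black boxes, $J$ depends on the factors only through their orderers; this is exactly the uniformity claim and makes the construction lift to families, disposing of the ``moreover'' clause. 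What remains is to exhibit the projection and to check that both sub-automata can be driven by local moves in each product.

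\textbf{Direct and semidirect products.} Let the generators of $G\rtimes H$ (with trivial action giving $G\times H$) be the $G$-generators $(g_i,1)$ and the $H$-generators $(1,h_j)$. The copy $G\times\{1\}$ is normal with cosets indexed by $H$; a $G$-generator acts by $(a,1)(x,h)=(ax,h)$, so $G$-moves stay inside one coset and trace there a faithful labelled copy of $\CG(G)$ based at the coset's node $(1,h)$. The key observation for the twisted case is $\phi_b(1)=1$: a path using only $H$-generators out of $\startnode=(1,1)$ stays on $\{(1,h)\mid h\in H\}$ and traces a faithful copy of $\CG(H)$. Hence I would declare the canonical path to $(g,h)$ to be the canonical $H$-path to $(1,h)$ followed by the canonical $G$-path from $(1,h)$ to $(g,h)$; each half is verifiable by confining the relevant sub-automaton to $H$- resp.\ $G$-labelled edges and invoking verifiability for $H$ resp.\ $G$, and the successor of a canonical path is computed by advancing the inner $G$-path, or, when it already reaches $\gmax$ for $G$, resetting it and advancing the outer $H$-path.

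\textbf{Replacement and zigzag products.} Here $G$ is $D$-regular and $H$ has $D$ vertices (the compatibility condition), and the product has vertex set $V(G)\times V(H)$ with each cloud $\{v\}\times V(H)$ a copy of $H$ and the outer edges realising $G$'s rotation map. I would again order clouds by $J_G$ and nodes inside a cloud by $J_H$. For the replacement product the inner edges let $J_H$ run natively inside a cloud; to simulate a step of $J_G$ I would represent ``being at $v$'' by sitting at a fixed canonical inner anchor of cloud $v$, simulate $J_G$'s next/first edge operations by cycling through the cloud's inner vertices (which carry the outer edges) and crossing one outer edge, then renormalise with $J_H$ back to the canonical anchor of the target cloud. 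For the zigzag product, whose edges are inner--outer--inner composites, I would recover these primitives by composition: a single zigzag edge supplies an outer transition together with two controllable inner moves, and---since the graphs are undirected---two zigzag edges can return to the same cloud, so inner navigation and hence $J_H$ can be simulated within a cloud. In both cases orderability of $H$ furnishes the needed \emph{select} (move to the $k$-th inner node) and \emph{rank} (count the inner position of the current node), obtained by running $J_H$ with a counting pebble.

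\textbf{Main obstacle.} I expect the delicate point to be the interaction between the simulated outer traversal and the inner bookkeeping: $J_G$ acts on whole clouds while the automaton physically occupies a single node, so after each outer move it must reliably re-find the canonical inner anchor and correct the rotation map's landing position by rank/select, and for the zigzag product it must synthesise clean inner and outer moves from composite edges. The crux is to guarantee \emph{determinacy} in the sense required by orderability---that the composite sequence of $\curr$-positions is identical in every accepting computation. This should follow because verifiability forces each nondeterministically reconstructed canonical $G$- and $H$-path to pass through a uniquely determined node sequence, and the nested schedule of $J_G$ over $J_H$ is a deterministic lexicographic enumeration even though every elementary step is nondeterministic; checking this carefully, case by case, is the bulk of the work.
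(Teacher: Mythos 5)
Your proposal is correct and follows essentially the same route as the paper: a nested lexicographic enumeration that runs one factor's orderer as an outer loop and the other's inside each fibre, with outer moves in the replacement/zigzag products simulated by select/rank inside a cloud followed by a crossing edge. The only substantive difference is that for the semidirect product you make the normal factor $G$ the inner coordinate (exploiting $\phi_b(1)=1$ so that $H$-generators trace a clean copy of $\CG(H)$ on $\{(1,h)\}$), which is a slightly more careful treatment of the case the paper dismisses with ``semidirect product is similar''.
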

\begin{proof} The direct product of $G$ and $H$ with (generators $\vec{m}$ and $\vec{l}$ resp.), $G \times H$, has generators $\vec{m},\vec{n}$, elements of the form $(g,h)$ with $g\in G$ and $h\in H$, with $(g_1,h_1).(g_2,h_2) = (g_1.g_2, h_1.h_2)$.
To order $G\times H$, run the \ndjag that ordered $G$, and for each node $g\in G$, run the \ndjag that orders $H$ (with $s$ at $g$) to order the coset $gH$ of $H$. Semidirect product is similar.

For graphs $G$ (with edges $\vec{m}$), and $H$ (with edges $\vec{n}$) where $|H| = deg(G)$, and an order $O$ over $H$, the replacement product of $G$ and $H$, $G \textregistered H$, has vertices $(g,h)$ with $g\in G$ and $h\in H$ and edges $\vec{n}, r$ with edge $n_i$ from $(g_1,h_1)$ to $(g_1,h_2)$ if the edge $n_i$ from $h_1$ leads to $h_2$ in $H$, and edge $r$ from $(g_1,h_i)$ to $(g_2,h_j)$ if the edge $m_i$ from $g_1$ leads to $g_2$ in $G$. Given that \ndjags can order $G$ and $H$ (according to $O$), to order $G \textregistered H$, modify the \ndjag that orders $G$ so that moves along edge $m_i$ are  simulated by traversing to the $i^{th}$ node of $H$ and moving along $r$. Further, at each node of $G$, traverse all the nodes of $H$ (before following the desired edge). \footnote{Converting a family of nonuniform \ndjags over graphs $L$ parameterized by degree $d$ to a uniform \ndjag uses  $L(d)\textregistered \CG(Z/dZ)$.} Zigzag product can be traversed similarly.\\
\end{proof}
  Ordering of product groups using an \ndjag that orders its factor groups will result in an increase in the machine size. Hence, it is reasonable to assume that iterating such product constructions might not be orderable. For instance, the following construction, which iterates the wreath product construction was used in \cite{purple-reach} to construct a family of graphs over which the formal system \purple cannot solve reachability.\\
$ \Lambda(H,G,1) = H$;\\ $\Lambda(H,G,i+1) = G\wr \Lambda(H,G,i)$\\

Here, we show that if $G$ and $H$ are families of groups such that $G$ can be traversed / ordered, and $|H|$ can be computed from its number of generators
 by some register machine that does not store values greater than its output then so can the wreath product $G\wr H$ can be traversed / ordered, too. Note that we do not require that $H$ should be orderable/traversable.  This shows iterated constructions such as the above one 
(or even modifications based on diagonalisation that achieve higher growth rates, e.g.\ Ackermann function) can be traversed / ordered.

\noindent

\paragraph{Properties of wreath products.}
Consider groups $G$ and $H$ with generators $\vec{m}$ and $\vec{n}$ respectively. Their wreath product, $G \wr H$ has $|G|^{|H|}*|H|$ elements of the form $(f,h)$ where $f:H\rightarrow G$ (not necessarily a homomorphism) 
and $h\in H$, with identity $(\lambda x.id_G,id_H)$, and multiplication operation defined as $(f_1,h_1).(f_2,h_2) = (\lambda h_3.~ f_1(h_3) . f_2(h_1^{-1}. h_3), ~ h_1.h_2)$.

Both $G$ and $H$ embed faithfully into $G\wr H$ by $g\mapsto
(\delta_g,1_H)$ and $h\mapsto (\delta_{1_G},h)$ where
$\delta_g(1_H)=g$ and $\delta_g(h)=1_G$ when $h\neq 1_H$. We identify
elements of $G$ and $H$ with these embeddings where appropriate. The
(embeddings of the) generators $G$ and $H$ together generate $G\wr H$
and we define those to be the distinguished generators of the wreath
product.

If $\vec h=h_1,\dots,h_n$ are pairwise distinct elements from $H$ and
$\vec g = g_1,\dots g_n$ are (not necessarily distinct) elements from
$G\setminus\{1_G\}$ we define $[\vec h{\mapsto}\vec g]=(f,1_H)\in G\wr
H$ where $f(h_i)=g_i$ and $f(h)=1_G$ when $h\not\in\{h_1,\dots,h_n\}$.
 
We have
\begin{equation}\label{edrt}
[\vec h{\mapsto}\vec g] = h_1g_1h_1^{-1}h_2g_2h_2^{-1}\dots h_ng_nh_n^{-1}
\end{equation}

\paragraph{Counting up to $|H|$.} We can now use elements of this form to represent numbers; to be precise we represent $n$ by elements $(f,1_H)$ where $|\{h\mid f(h)\neq 1_G\}|=n$. 

We now consider $\ndjags$ run on the Cayley graph of 
$G\wr H$ with the above generators. We assume a fixed start pebble allowing us to identify pebble positions with elements of $G\wr H$. 

\medskip
\begin{lemma}
We can test  whether an element $x\in G\wr H$  represents a number.
\end{lemma}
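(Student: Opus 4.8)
The plan is to observe that ``$x$ represents a number'' is exactly the condition that the $H$-component of $x$ is trivial, and to test this by tracking the $H$-component with a shadow pebble while tracing a path from $\startnode$ to $x$. Write $\varphi\colon G\wr H\to H$ for the projection $(f,h)\mapsto h$, which is a group homomorphism whose kernel is the base group $\{(f,1_H)\mid f\colon H\to G\}$. By the chosen representation, $x$ represents a number precisely when $x=(f,1_H)$ for some $f$, i.e.\ when $\varphi(x)=1_H$. So it suffices for the \ndjag to decide $\varphi(x)=1_H$.

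The key structural fact is that the embedded copy of $H$, namely $\{(\delta_{1_G},h)\mid h\in H\}$, is a subgroup, and each distinguished $H$-generator $n_j=(\delta_{1_G},h_j)$ acts on it by left multiplication within that copy, since $(\delta_{1_G},h_j)(\delta_{1_G},h)=(\delta_{1_G},h_jh)$, whereas every $G$-generator $m_i=(\delta_{g_i},1_H)$ projects to $1_H$ under $\varphi$. Since the distinguished generators are hard-coded, the \ndjag knows for each edge label whether it is a $G$-move or an $H$-move. The algorithm is then: keep a pebble $s$ at $\startnode$, place a helper pebble $p'$ and a shadow pebble $y$ at $\startnode$, and nondeterministically trace a path with $p'$ from $\startnode$ towards the pebble sitting on $x$ (such a path exists because the Cayley graph is connected); in tandem, each time the traced move uses an $H$-generator $n_j$ move $y$ along $n_j$ as well, and each time it uses a $G$-generator leave $y$ fixed. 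Halt the trace when $p'$ coincides with $x$ (detected via the incidence relation on pebbles), and report that $x$ represents a number iff $y$ then coincides with $s$.

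For correctness, note that if the traced path reaches $x$ using generators $\gamma_1,\dots,\gamma_k$ in order, then $x=\gamma_k\cdots\gamma_1$ and, because $\varphi$ is a homomorphism that kills the $G$-generators, $\varphi(x)$ equals the product (in the same order) of the $h_j$ contributed by the $H$-moves; this is exactly the $H$-component of the final position of $y$, so $y=(\delta_{1_G},\varphi(x))$. Hence $y=\startnode$ iff $\varphi(x)=1_H$, and this outcome is the same for every path reaching $x$, so the test is independent of the nondeterministic choices and the \ndjag accepts exactly when $x$ represents a number. The only mild obstacle is bookkeeping: arranging the ``skip on a $G$-move, step on an $H$-move'' synchronisation with finitely many pebbles and states, and being careful with the left-multiplication convention $\rho(v,i)=m_iv$ so that $y$ really computes $\varphi(x)$ rather than its inverse; both are routine given that \ndjags can already perform group multiplication and detect pebble coincidence.
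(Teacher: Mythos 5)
Your proposal is correct and is essentially the paper's own argument: the paper's proof also nondeterministically guesses a path to $x$ and checks that the $H$-generators along it multiply to $1_H$, which is exactly your shadow-pebble computation of the projection $\varphi(x)$. You have merely spelled out the homomorphism justification and the pebble bookkeeping that the paper leaves implicit.
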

\begin{proof}
Nondeterministically guess a path (from the fixed start pebble)
  to $x$ and check that the $H$-generators contained in that path
 multiply to $1_H$. 
\end{proof}

\medskip
\begin{lemma}\label{testf}
Given a number representation $x=(f,1_H)$ and an element $h\in H$ (both by pebbles) it is possible to test whether $f(h)=1_G$. 
\end{lemma}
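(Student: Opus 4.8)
The plan is to detect the coordinate $f(h)$ by perturbing $x$ only at position $h$ and observing how the represented number changes. Using the notation of \eqref{edrt}, for any $g\in G$ the single-point element $[h\mapsto g]=hgh^{-1}$ is a number representation that is trivial everywhere except at $h$, where it takes the value $g$. Given pebbles on $x=(f,1_H)$ and on (the embedding of) $h$, an \ndjag can form $[h\mapsto g]$ for a nondeterministically guessed $g\in G$: it guesses a path of $G$-generators to place a pebble on $(\delta_g,1_H)$ and then conjugates by $h$ using the available group multiplication and inverse. The key computation, read off directly from the wreath multiplication formula, is that $y\eqdef x\cdot[h\mapsto g]=(f',1_H)$ satisfies $f'(h)=f(h)g$ and $f'(h')=f(h')$ for all $h'\neq h$; that is, multiplying by $[h\mapsto g]$ alters $x$ in the single coordinate $h$ and leaves all others untouched.

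Consequently the number represented by $y$ differs from that of $x$ in a way that exactly exposes whether position $h$ was occupied. Fix any $g\in G\setminus\{1_G\}$. If $f(h)=1_G$ then $f'(h)=g\neq 1_G$, so the support of $y$ is that of $x$ together with $h$, and $y$ represents one more than $x$. If instead $f(h)\neq 1_G$, then $f'(h)=f(h)g$ is trivial precisely when $g=f(h)^{-1}$ and nontrivial otherwise; hence for the guess $g=f(h)^{-1}$ (which is itself $\neq 1_G$) the element $y$ represents one less than $x$, while no guess $g\neq 1_G$ can make $y$ represent more. This yields two mutually exclusive, nondeterministically verifiable witnesses: ``$f(h)=1_G$'' is witnessed by some $g\neq 1_G$ for which $y$ represents one more than $x$, and ``$f(h)\neq 1_G$'' by some $g\neq 1_G$ for which $y$ represents one less than $x$.

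To turn this into a test, the \ndjag runs the two verifiers in parallel nondeterministic branches. In each branch it guesses $g\neq 1_G$, forms $y=x\cdot[h\mapsto g]$ by group multiplication, checks (as in the preceding lemma) that $y$ is indeed a number representation, and compares the number represented by $y$ with that represented by $x$ using the counting-up-to-$|H|$ machinery established above; the branch succeeds iff the count has moved in the required direction. Exactly one of the two families of branches can succeed, and which one does determines the answer. The main obstacle — the step deserving care — is the comparison of the two represented numbers purely by \ndjag operations: this is where the counting capability up to $|H|$ is invoked, and one must verify that this comparison routine does not itself presuppose the ability to inspect individual coordinates (which is precisely what we are trying to establish). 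The wreath-product coordinate computation of the first paragraph, though routine, is equally essential, since the whole argument rests on the fact that the perturbation by $[h\mapsto g]$ is confined to the single position $h$.
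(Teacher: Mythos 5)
Your single-coordinate perturbation idea is group-theoretically sound: the computation $x\cdot[h\mapsto g]=(f',1_H)$ with $f'(h)=f(h)g$ and $f'$ unchanged elsewhere is correct, and the two witness families (count goes up by one iff $f(h)=1_G$ for some $g\neq 1_G$; count goes down by one iff $f(h)\neq 1_G$, witnessed by $g=f(h)^{-1}$) are indeed mutually exclusive and exhaustive. However, the reduction to ``compare the numbers represented by $x$ and $y$'' is circular in this development, and the worry you flag in your last paragraph is not a technicality but a fatal one. The only way the machinery here extracts the number represented by $(f,1_H)$ is by guessing a path in the format of Equation~\ref{edrt} and \emph{verifying} that it really has that format, i.e.\ that the $h_i$ are pairwise distinct; that verification is exactly an instance of the coordinate test you are trying to prove (this is how Lemma~\ref{canpath} uses the present lemma, and the successor-comparison lemma in turn rests on Lemma~\ref{canpath}). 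Without the distinctness check, a dishonest nondeterministic guess can present a ``path'' with repeated $h_i$ whose blocks partially cancel, so the block count of an unverified path is not a sound measure of the support size, and your comparison step has no independent implementation.

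The paper's own proof avoids counting altogether and is strictly more elementary: nondeterministically trace a path $\rho$ to $x$ while maintaining in a pebble $u$ the running $H$-component of the partial product; by the wreath multiplication law, a $G$-generator applied when the running $H$-component equals $h$ contributes precisely to coordinate $h$, so one accumulates exactly those $G$-generators into a second pebble $v$ and lets the rest pass. When the tracing pebble reaches $x$, $v$ holds $f(h)$ and one simply checks $v=1_G$. If you want to keep your perturbation argument, you would need to supply a comparison primitive for represented numbers that does not route through canonical-path verification; as the toolkit stands, no such primitive is available at this point in the development.
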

\begin{proof}
Nondeterministically move a pebble $a$ to $x$ on a path $\rho$. Use a distinguished pebble $u$ to store the current $H$-value (the second component of $a$) as we go along the path. To do this, we apply the $H$-generators contained in the path $\rho$ to $u$. Another distinguished pebble $v$ will represent the value $f(h)\in G$ when $a=(f,u)$. To do this, check at all times whether $u$ happens to be equal to $h$ and in that case apply $G$-generators in $\rho$ to $v$. When $u\neq h$, let those $G$-generators pass. 
Then, when $a$ finally reaches $x$, we check whether $v=1_G$. 
\end{proof}

\medskip
\begin{lemma} \label{canpath}
Given a number representation $x$ it is possible to nondeterministically guess a path to $x$ that has the format in Equation~\ref{edrt}. 
\end{lemma}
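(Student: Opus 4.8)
The plan is to guess the canonical path one conjugation block $h_i g_i h_i^{-1}$ at a time, moving a distinguished ``current'' pebble $c$ (initially on the start node $1_{G\wr H}$) and maintaining the invariant that \emph{between} blocks $c$ sits on a number representation. This invariant holds because, with the embeddings fixed in the preliminaries, a single block evaluates to $h_i g_i h_i^{-1}=(f_i,1_H)$ with $f_i(h_i)=g_i$ and $f_i\equiv 1_G$ elsewhere, and because a product of two elements of the form $(f,1_H)$ is again of that form with pointwise-multiplied first component (the $H$-components being trivial makes the twist in the multiplication law vanish). Hence after completing $i-1$ blocks $c$ rests on $(f_{<i},1_H)$, a genuine number representation whose first component is exactly the pointwise product of the blocks chosen so far.

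To guess a single block I would first nondeterministically move a helper pebble $p_h$ from the start along a guessed path of $H$-generators to some node $h_i\in H$, and, using the tandem technique already employed for group multiplication, apply the same generator sequence to $c$; I would then guess a nonempty path of $G$-generators and apply it to $c$ to realise some $g_i\in G$, and finally retrace $h_i$ in reverse --- again in tandem with $p_h$ returning to $1_H$ and relying on the relaxed undirectedness of Cayley graphs to reverse each edge --- so that the net effect on $c$ is application of the block. Two checks enforce the format: to guarantee $g_i\neq 1_G$ I would park a pebble on $c$ before the $G$-moves and verify that $c$ has genuinely changed node afterwards; and to guarantee that the $h_i$ are pairwise distinct I would, \emph{before} starting the block, invoke Lemma~\ref{testf} on the current number representation $c=(f_{<i},1_H)$ and the freshly guessed $h_i$ to confirm that $f_{<i}(h_i)=1_G$, i.e.\ that coordinate $h_i$ has not yet been touched.

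At each block boundary the \ndjag nondeterministically decides whether to stop; on stopping it checks $c=x$ (both pebbles on the same node) and accepts. Correctness then follows in both directions. Every accepting computation produces, by construction, a path that is a concatenation of blocks $h_i g_i h_i^{-1}$ with the $h_i$ distinct (freshness check) and each $g_i\neq 1_G$ ($G$-change check), hence a path of the shape of Equation~\ref{edrt} reaching $x$. Conversely, since $x=(f,1_H)$ is a number representation, the path that enumerates the support of $f$ in any order, setting coordinate $h$ to $f(h)$, is of exactly this form, passes all checks, and reaches $x$; moreover the commutativity of disjoint-support blocks means the order in which the support is enumerated is immaterial, so at least one accepting computation exists.

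The main obstacle I anticipate is not the high-level structure but the faithful bookkeeping of the conjugation with a bounded number of pebbles: applying $h_i$ and later its inverse to $c$ while only ``remembering'' $h_i$ through the pebble $p_h$, and interleaving the subroutine of Lemma~\ref{testf} (which itself guesses an auxiliary path to $c$) without disturbing $c$, $p_h$ or $x$. This calls for a disjoint allocation of helper pebbles and a consistent convention for which side the generators act on: since the edges are left multiplications, a guessed path realises the reverse product of its labels, so the intra-block and block order must be chosen with this in mind. Once these conventions are fixed, each individual step reduces to the tandem-movement and path-verification primitives already established.
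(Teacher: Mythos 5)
Your proposal is correct and follows essentially the same route as the paper: nondeterministically trace the path to $x$ block by block, keep the running $H$-coordinate and the partial number representation in pebbles, and enforce pairwise distinctness of the $h_i$ by applying Lemma~\ref{testf} to the partial product, failing on a repeat. The only differences are cosmetic bookkeeping (the paper parses the maximal $H$-blocks of a raw guessed word and updates the current coordinate via $h:=vw$, whereas you guess each conjugation block $h_i g_i h_i^{-1}$ in structured form and undo $h_i$ by edge reversal), plus your explicit checks that $g_i\neq 1_G$ and that every number representation is reached by at least one path of this shape --- both of which the paper leaves implicit but are needed for the freshness test to be sound.
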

\begin{proof}
We trace a path $\rho$  to $x$ and store the values $h_1$ and $h_i$ in pebbles $u$ and $v$. Initially we set both $u$ and $v$ to the first $H$-block (maximal subword of $H$-generators) in $\rho$. As we progress, we accumulate the next $H$-block in a pebble $w$, compute $h:=vw$ and check using Lemma~\ref{testf} that $h$ is not among $h_1,\dots,h_i$. Should this happen, we fail. Otherwise, we can update $v$ to $h$ and continue. 
\end{proof}

\medskip
\begin{lemma}
Given two number representations we can test whether they represent the same number and whether one represents the successor of the other. 
\end{lemma}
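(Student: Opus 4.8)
The plan is to reduce both tests to a comparison of the \emph{lengths} of canonical paths. The key observation is that a canonical path to a number representation $x=(f,1_H)$, in the format of Equation~\ref{edrt}, contains exactly one block $h_i g_i h_i^{-1}$ per support element $h_i$; by the guarantees of Lemma~\ref{canpath} (distinctness of the $h_i$, and $g_i\neq 1_G$ checked via Lemma~\ref{testf}) each block contributes one fresh support element, so the number of blocks equals the number $n=|\{h\mid f(h)\neq 1_G\}|$ that $x$ represents. Hence two representations denote the same number iff their canonical paths have the same number of blocks, and $y$ is the successor of $x$ iff a canonical path to $y$ has exactly one block more than one to $x$.

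Concretely, I would run two copies of the procedure of Lemma~\ref{canpath} in parallel, tracing a canonical path to $x$ with a pebble $a$ and a canonical path to $y$ with a pebble $b$, and drive them in lockstep so that each round advances each trace by exactly one block (reusing the distinctness and non-triviality checks of Lemmas~\ref{testf} and~\ref{canpath} per block). In each round the guess may additionally declare either trace finished, whereupon I verify that the corresponding trace pebble coincides with the target pebble placed on $x$ (resp.\ $y$); a trace declared finished that fails this check causes the branch to reject. Since after processing $k$ blocks the current element has $f$-support of size exactly $k$, a trace to $x$ can be validly declared finished only after exactly $n_x$ rounds, and likewise for $y$ after $n_y$ rounds, and such a finishing branch does exist because a genuine canonical path is available.

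For the equality test I would accept iff both traces are validly finished in the same round, which is achievable iff $n_x=n_y$. For the successor test I would accept iff $a$ is validly finished in some round $r$ while $b$ is not yet finished, and $b$ is validly finished in round $r+1$, which is achievable iff $n_y=n_x+1$. In each case, when the asserted relation holds the corresponding nondeterministic branch exists and accepts, and when it fails no branch can validly finish the two traces in the required rounds, so the input is rejected; this gives exactly the verifiable tests we need for counting up to $|H|$ and for defining the successor.

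The hard part will be purely the bookkeeping: synchronising two independently guessed nondeterministic traces block-by-block with only a constant number of pebbles, and reliably detecting when a trace pebble has reached its target. What makes this feasible is that the per-block distinctness check need not store all previously visited support positions; following Lemma~\ref{canpath}, membership of a candidate position in the support is read off directly from $x$ (resp.\ $y$) using Lemma~\ref{testf}. The two traces then interact only through the shared round counter that keeps them in step, and every other ingredient — applying $H$- and $G$-generators along a guessed path, and comparing a trace pebble with a target pebble — is already available.
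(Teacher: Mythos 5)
Your proposal is correct and follows essentially the same route as the paper: the paper's proof is exactly "trace canonical paths (Lemma~\ref{canpath}) and compare their respective numbers of $H$-blocks, checking equality or the successor relation." You have merely spelled out the lockstep bookkeeping that the paper leaves implicit.
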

\begin{proof}
This is a direct consequence of Lemma \ref{canpath}: trace canonical paths and check that their respective numbers of $H$-blocks are equal or in the successor relation. 
\end{proof}

\medskip
\begin{lemma}\label{ndcount}
If $|H|$  is computable from the number $|\vec n|$ of $H$'s generators
 by a register machine which never stores a value larger than its final output 
  then an \ndjag can count till $|H|$.
\end{lemma}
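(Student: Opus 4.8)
The plan is to have the \ndjag simulate the given register machine $M$ step by step, representing each register of $M$ by a pebble whose position is a number representation in the sense introduced above (an element $(f,1_H)$ whose count $|\{h\mid f(h)\neq 1_G\}|$ equals the stored value). Since $M$ has a fixed, hard-coded program with finitely many registers, its program counter can be folded into the finite control of the \ndjag, and we allocate one pebble per register together with a bounded supply of helper pebbles for the sub-tests of Lemmas~\ref{testf} and~\ref{canpath}. The input to $M$ is the constant $|\vec n|$, which the \ndjag knows because it is degree-specific.

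For initialization I would start every register pebble at $\startnode$, which represents $0$ since $1_{G\wr H}=(\lambda x.1_G,1_H)$ has count $0$, and build the initial value $|\vec n|$ by performing $|\vec n|$ successor steps, a fixed hard-coded sequence. Note that the hypothesis that $M$ never stores a value exceeding its output $|H|$ forces $|\vec n|\le|H|$, so this initial value is indeed representable.

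Next I would implement the basic register-machine instructions. A \emph{zero test} on a register amounts to checking whether its pebble sits on $\startnode$, i.e.\ represents the number $0$. An \emph{increment} is done by nondeterministically guessing a new node $y$, verifying that $y$ is a number representation and that $y$ is the successor of the current value (the successor test), and then moving the register pebble to $y$; \emph{decrement} is symmetric, guessing the predecessor. Crucially, these tests depend only on the \emph{counts} of the representations, so it is harmless that the guessed $y$ bears no structural relation to the old value; any representation with the correct count is accepted. Because $M$ never stores a value larger than $|H|$, every value that arises has count in $\{0,\dots,|H|\}$ and is therefore representable, so a legitimate successor always exists and no step gets stuck for spurious reasons.

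The simulation is a finite computation since $M$ halts, and every wrong nondeterministic guess is eliminated by the embedded number- and successor-tests, so only faithful simulations survive to an accepting run. When $M$ halts, the output register's pebble rests on a node representing $|H|$, namely an element $(f,1_H)$ with $f(h)\neq 1_G$ for \emph{every} $h\in H$; possessing this representation of $|H|$ together with the equality and successor tests is exactly what it means to ``count till $|H|$''. The main obstacle is precisely that our only primitive on number representations is comparison by count rather than exact reconstruction; the argument works because Minsky-style increment, decrement and zero-test are all a register machine needs, and these are exactly the count-only operations we can realise. The hypothesis that $M$ stays within its output bound is what guarantees representability of every intermediate value and hence the soundness of the guess-and-verify steps.
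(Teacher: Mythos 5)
Your proposal is correct and follows essentially the same route as the paper's (much terser) proof: hard-code the register machine's control into the \ndjag's finite state, encode each register as a pebble holding a number representation, and realise increment, decrement and zero-test via the guess-and-verify primitives of the preceding lemmas, with the no-overflow hypothesis guaranteeing that every intermediate value remains representable. Your elaboration of the initialization with $|\vec n|$ and of why count-only comparison suffices for Minsky-style instructions is a faithful filling-in of details the paper leaves implicit.
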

\begin{proof}
The control of the register machine can be hard-coded in the transition table of an \ndjag while number representations are used to encode registers. Register values are guaranteed never to exceed $|H|$ and the operations of increment, decrement and checking for zero can be done as described in the above lemmas.
\end{proof}

For example, if $|H|$ is an iterated power of $2$ then the premise of the lemma applies. The lemma is needed in the first place because the simulation of arithmetic contained in the earlier lemmas does not in general allow the detection of overflow and hence wraparound. Once we know $|H|$ (e.g.\ by the simulation of a register machine which computes $|H|$ from $|\vec n|$,  never storing a value larger than $|H|$) we can then count till $|H|$. 
\begin{theorem}
Let $G$, $H$ be groups such that $G$ can be traversed(ordered) and such that the generators of $H$ are either given explicitly or can be computed by a single $\ndjag$. Then there exists an \ndjag traversing(ordering) the Cayley graph of $G\wr H$ and the construction is uniform in $G, H$.
\end{theorem}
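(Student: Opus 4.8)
The plan is to order (and, a fortiori, traverse) $\CG(G\wr H)$ by a doubly nested enumeration: an outer loop running through the elements $h\in H$ in a fixed canonical order $<_H$, and for each such $h$ an inner loop running through all functions $f\colon H\to G$ in odometer order, placing $\curr$ on the node $(f,h)$. The two ingredients that make this possible are that $G$ is orderable (hypothesis) and that, by Lemma~\ref{ndcount}, the \ndjag can count up to $|H|$ (hence, using several such counters, up to any fixed polynomial in $|H|$). I would represent the ``current'' function $f$ by the $f$-component of a dedicated pebble sitting at $(f,1_H)$: its value $f(h')$ at a given position $h'\in H$ can be read off by Lemma~\ref{testf}, and it can be updated at a single position $h'$ by multiplying with an element of the shape $h' g (h')^{-1}$ from Equation~\ref{edrt}, which the \ndjag can form using group multiplication and inverse. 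Since $G$ is ordered, the odometer increment on $f$ — find the $<_H$-least position whose $G$-digit is not maximal, advance it, and reset all smaller positions to $1_G$ — is well defined once a canonical order $<_H$ on $H$ is available.

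For the bare \emph{traversal} it suffices to expand a frontier nondeterministically. I would keep a visited bit-array over $H$, again stored as the $f$-component of a pebble (using a fixed $g_0\neq 1_G$ to mark ``visited'' and reading it back by Lemma~\ref{testf}), initialise it to mark $1_H$, and then repeat exactly $|H|$ times: nondeterministically guess a node, check it is marked, guess a generator, move to the corresponding neighbour, check it is unmarked, and mark it. Connectedness of the Cayley graph guarantees that a fresh frontier edge exists until all $|H|$ elements are marked, and the counter of Lemma~\ref{ndcount} certifies that exactly $|H|$ distinct elements were visited. Running the $G$-fibre enumeration at each newly marked $h$ then visits all of $G\wr H$.

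The delicate point is \emph{orderability}, where the outer loop must visit $H$ in the same canonical order on every accepting run; this is the step I expect to be the main obstacle. I would fix $<_H$ to be the breadth-first/lexicographic order (by $\dist(1_H,\cdot)$, ties broken by the labels of the shortest path). Guessing a short path certifies an upper bound on a distance, but certifying that a path is genuinely shortest requires the complementary, co-nondeterministic assertion that no shorter path exists. The key observation is that this is exactly an Immerman--Szelepcsenyi inductive-counting computation over the Cayley graph of $H$: the cardinalities of the balls $\{h'\mid \dist(1_H,h')\le t\}$ are computed inductively in $t$, all intermediate counts are bounded by $|H|$, and the only graph primitives used are ``move along a generator'' and ``test equality of two nodes'', both of which the \ndjag realises directly by pebble moves and coincidence of pebbles. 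Since the counter of Lemma~\ref{ndcount} already reaches $|H|$, and the increment, zero-test and comparison of number representations are supplied by the preceding lemmas, the \ndjag can run this inductive count and thereby decide $h' <_H h''$ for given pebbles. Note that this produces a canonical order on $H$ \emph{internal to} $G\wr H$ even though $H$ in isolation need not be orderable: the counting power that a standalone \ndjag on $H$ lacks is precisely what the number representations inside $G\wr H$ restore.

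It then remains to assemble the pieces. I would take the canonical path to $(f,h)$ to list the support of $f$ in increasing $<_H$-order with the canonical $G$-representative of each value, in the normal form of Equation~\ref{edrt}, followed by the canonical $H$-path to $h$; the induced total order on $\CG(G\wr H)$ is lexicographic, first by $h$ under $<_H$ and then by $f$ under the odometer order, and is verifiable by the tests just described, so the traversable/orderable framework applies. Placing $\curr$ on each successive $(f,h)$ is again done by group multiplication. Finally, the whole construction refers to $G$ only through its ordering \ndjag, to $H$ only through its (given or \ndjag-computable) generators, and to $|H|$ only through the register machine of the hypothesis; hence it is uniform in $G$ and $H$, composes, and in particular applies to the iterated wreath products $\Lambda(H,G,i)$ discussed above.
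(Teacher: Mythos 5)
Your overall architecture is the right one and matches the paper's in its first and last steps: use the number representations $(f,1_H)$ and Lemma~\ref{ndcount} to recover counting up to $|H|$ (and hence $\mathit{poly}(|H|)$), use that to obtain a canonical order on $H$ \emph{inside} $G\wr H$ even though $H$ alone need not be orderable, and then enumerate $G\wr H$ by an odometer over the functions $f\colon H\to G$ nested with the order on $H$. The guess-and-count frontier argument for bare traversal of $H$ is also sound. The gap is in the middle step, exactly where you anticipated trouble: you propose to obtain the order $<_H$ by running Immerman--Szelepcs\'enyi inductive counting on the Cayley graph of $H$, asserting that the only graph primitives needed are ``move along a generator'' and ``test equality of pebbles''. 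That is not the case. The inductive step of Immerman--Szelepcs\'enyi computes $|B_t|$ by an outer loop \emph{over all vertices} and certifies non-membership in $B_{t-1}$ by producing $|B_{t-1}|$ \emph{pairwise distinct} certified members of $B_{t-1}$; both of these presuppose a canonical enumeration (equivalently, a verifiable total order) of the vertices, which on a Turing machine comes for free from the input encoding and which an \ndjag on $H$ does not have --- it is precisely what you are trying to construct. Counting does not substitute for it: with finitely many pebbles you cannot check pairwise distinctness of $|B_{t-1}|$ nondeterministically produced vertices except by producing them in increasing order of some already-verifiable total order, and ordering candidate vertices by their (length-)lexicographically least shortest paths is circular, since certifying that a guessed path is lex-least-shortest is itself the co-nondeterministic assertion you set out to decide. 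This is the same chicken-and-egg obstruction that makes implementing Immerman--Szelepcs\'enyi on \ndjags an open problem in general, as noted in the introduction.

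The paper closes this gap differently: it invokes the fact (cited as \cite{ramjag} and used throughout) that \jags equipped with counters can implement Reingold's \emph{deterministic} undirected-connectivity algorithm, whose exploration needs only arithmetic on counter values and pebble moves --- no prior vertex order --- and whose deterministic visit order then \emph{defines} the canonical order on $H$. So the repair is local: keep your counting infrastructure and your odometer over $G\wr H$, but replace the inductive-counting derivation of $<_H$ by the simulation of Reingold's algorithm on $H$ using counters bounded by $\mathit{poly}(|H|)$. Note also that for the traversal claim about $G\wr H$ (not just $H$) your frontier argument does not suffice on its own: certifying that all $|G|^{|H|}\cdot|H|$ nodes are visited already requires the odometer on $f$, hence an order on the digit positions, hence $<_H$; so even the ``traversed'' half of the theorem is routed through ordering $H$ first, as in the paper.
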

\begin{proof}
The previous lemmas together with Theorem~\ref{ndcount} show that we can perform arbitrary arithmetic operations with values up to $\textit{poly}(|H|)$. This allows us to implement Reingold's algorithm on $H(n)$ and hence traverse (order) $H$. It is then easy to traverse (order) $G \wr H$. 
\end{proof}

\section{Conclusion}
 Algorithms for co-st-connectivity use techniques based on counting. We wanted to explore the question of whether the operation of counting is necessary to solve the problem of co-st-connectivity. For this purpose we considered jumping automata on graphs  systems which cannot count, and consequently cannot do u-st-connectivity, but are powerful otherwise. This was the weakest of such models. 
 
 We added nondeterminism to \jags, and studied the problem of coreachability on Cayley graphs which were used to show that deterministic \jags cannot solve undirected reachability. This result exploited the rigid structure of such graphs - in particular, the fact that for any two nodes of a Cayley graphs, there is an automorphism that maps one to the other. To our surprise, we found that nondeterminism can exploit this property. Many families of Cayley graphs and product constructions over them can be ordered using \ndjags,  and in consequence any (co)-\nlogspace algorithm can be implemented, including co-st-connectivity. Since this machine model is the weakest, these results hold for stronger models such as positive transitive closure logic as well. 
 

While we are as yet unable to show that \ndjags can traverse arbitrary graphs, our results considerably restrict the search space for possible counterexamples. On the one hand, they should exhibit a very regular structure so that deterministic \jags cannot already solve reachability; on the other hand, the very presence of such regular structure allowed the applicability of our new nondeterministic algorithms in all concrete cases.


\bibliography{bib}
\end{document}